\documentclass[superscriptaddress,aps,pra,amssymb,amsfonts, twocolumn]{revtex4}
\usepackage{amsmath} \usepackage{amsthm} \usepackage{color}
\usepackage{graphics,graphicx}
\usepackage[pdftex]{hyperref}

\usepackage{times}
\usepackage{amssymb,latexsym,graphicx}

\newcommand{\ket}[1]{|#1\rangle}
\newcommand{\bra}[1]{\langle#1|}

\newcommand{\proj}[1]{|#1\rangle\langle#1|}

\newcommand{\Tr}{\mbox{\rm Tr}}

\newcommand{\beq}{\begin{equation}}
\newcommand{\eeq}{\end{equation}}
\newcommand{\beqn}{\begin{equation*}}
\newcommand{\eeqn}{\end{equation*}}

\newcommand{\C}{\ensuremath{\mathbb{C}}}

\newtheorem{theorem}{Theorem}

\newtheorem{lemma}[theorem]{Lemma}
\newtheorem{claim}[theorem]{Claim}

\newtheorem{definition}[theorem]{Definition}

\newcommand{\be}{\begin{eqnarray}}
\newcommand{\ee}{\end{eqnarray}}

\newcommand{\Id}{\mathsf{id}}

\newcommand{\ignore}[1]{}

\newenvironment{feasSDP}[2]{
\smallskip
\begin{center}
\begin{tabular}{ll}
#1 & #2\\
whenever
}
{
\end{tabular}
\end{center}
\smallskip
}

\newcommand{\Real}{\mathbb{R}}
\newcommand{\hil}{\mathcal{H}}
\newcommand{\tr}{\mathsf{tr}}
\newcommand{\id}{\mathsf{id}}

\bibliographystyle{plain}

\begin{document}

\title{More non-locality with less entanglement}
\author{Thomas Vidick}
\affiliation{Computer Science division, UC Berkeley, 94720 Berkeley, USA}
\email{vidick@eecs.berkeley.edu}
\author{Stephanie Wehner}
\affiliation{Center for Quantum Technologies, National University of Singapore, 2 Science Drive 3, 117543 Singapore}
\email{wehner@nus.edu.sg}
\date{\today}

\begin{abstract}
	We provide an explicit example of a Bell inequality with 3 settings and 2 outcomes per site for which the largest violation is not obtained by the maximally entangled state, even if its
	dimension is allowed to be arbitrarily large. 
	This complements recent results by Junge and Palazuelos (arXiv:1007.3042) who show, employing tools from operator space theory, that such inequalities do 
	\emph{exist}.
	Our elementary example provides a simple, natural setting in which it can be explicitly demonstrated that even an arbitrarily large supply of EPR pairs is not the strongest possible nonlocal resource. 
\end{abstract}
\maketitle

\section{Introduction}

Entanglement is a powerful resource, facilitating computation, communication, or more generally any nonlocal task. 
Like all resources it is useful to be able to measure it, so that entangled states could be ranked according to their usefulness for a given task. 
A very natural measure for the entanglement of any bipartite state $\ket{\Psi} \in \hil_A \otimes \hil_B$ is the
entropy of entanglement
$E(\Psi) = S(\rho_A)$~\cite{bennett:entEntropy}, where $S(\rho_A) = - \Tr(\rho_A \log \rho_A)$ is the von Neumann entropy
and $\rho_A = \tr_B(\proj{\Psi})$ is the reduced density operator of $\ket{\Psi}$ on one of the two subsystems. 
In any dimension $d$ this measure is maximized by the maximally entangled state 
\begin{align}
	\ket{\Psi_d} = \frac{1}{\sqrt{d}}\sum_{j=1}^d \ket{j}\ket{j}\ .
\end{align}
Since $\ket{\Psi_d}$ exhibits the largest amount of entanglement, it would be natural to guess that it would indeed
be the most useful state for any nonlocal task. This belief is reinforced by the fact that this state has proven extremely 
useful for many quantum information problems (e.g.~\cite{teleport,e91,superdense}), and is by itself a sufficient resource for 
the creation of \emph{any} other nonlocal state as soon as one allows local operations and classical communication (LOCC)~\cite{teleport}. 
Moreover, it is known that any shared pure state $\ket{\Psi}$ violates a Bell inequality if and only if it is entangled~\cite{gisin:nonProduct,popescu:generic}, 
suggesting that the amount of entanglement may play a central role in quantifying the strength of nonlocal correlations.

For a long time it was implicitly assumed that $\ket{\Psi_d}$ is the most useful state with respect to violation of Bell inequalities~\cite{bell}.
The first doubts cast on this conjecture stem from a result by Eberhard~\cite{eberhard:background} who showed that when it comes to closing the detection efficiency loophole 
less entangled states can be more useful. More recently, such doubts were confounded by the surprising fact that, at least in small dimensions in which numerical experiments can be 
conducted, 
there are inequalities for which the maximally entangled state does not give the maximum violation. More specifically, for 
every dimension $d$ there is a Bell inequality (such as the CGLMP inequality~\cite{Collins2002}) which in that dimension is maximally violated by a state different 
from the maximally entangled state
--- a state with \emph{lower} entanglement~\cite{acin:qutrit,acin:maxEntangled,richard:cglmp}.
Conversely, unlike for the maximally entangled state, 
even an infinite supply of certain maximally non-local resources does not allow us to simulate all possible correlations coming from some less entangled states~\cite{nicolas:differentResources}.
This has prompted the realization that \emph{nonlocality} might be a resource of a different nature than entanglement, and many other examples have been discovered in the realm of
Bell inequalities and quantum cryptography (see~\cite{Methot2006} for a survey), as well as quantum information theory~\cite{Bennett2009,mario:reverseShannon}.
Since Bell inequalities provide a very natural setting in which the properties of entanglement manifest themselves, it is interesting to 
ask whether an arbitrarily large supply of EPR pairs could indeed not be a sufficient resource for their maximal violation. 
In a recent paper, Junge and Palazuelos~\cite{marius:maxEntangled} provided a negative answer to this question. In particular, they 
showed using tools employed in the study of operator algebras 
and a probabilistic argument that there \emph{exists} a family of Bell inequalities for which the maximum violation cannot be obtained by using the maximally entangled state, even if its dimension $d$
is arbitrarily large. However, they do not provide any explicit such inequality, and moreover their results only hold in an asymptotic sense, so that it is not immediately clear what is the size of the smallest Bell inequality for which this phenomenon can be observed.

\subsection{Result}

In this note, we provide a simple example of an extremal Bell inequality that by itself already 
demonstrates that it is sometimes necessary to use \emph{less} entanglement in order to obtain \emph{more} nonlocality. 
Unlike the results of~\cite{marius:maxEntangled} we use only elementary techniques.

The inequality we consider involves measurements on only two sites, Alice and Bob,
where each party has only three measurement settings and two outcomes. This inequality is thus one of the simplest possible examples of a 
Bell inequality where an arbitrary amount of EPR pairs could really make a difference. First of all, it is known that any inequality with only two settings
and two outcomes per site can be maximally violated when Alice and Bob each hold only a single qubit~\cite{masanes05:_extrem}. In particular, this
means that if one were merely interested in knowning whether the maximally entangled state is 
\emph{necessary} to obtain the maximum possible violation it suffices to check states in a fixed dimension, where it is already known 
that the maximally entangled state is not always the best. Second, however, the inequality we consider is also the first \emph{extremal} Bell inequality
after the CHSH inequality~\cite{chsh}. 
The CHSH inequality is particularly fundamental in that 
violation (by a certain state) of any two-setting inequality implies the same state also violates CHSH. 
Moreover, it is known that the CHSH inequality is maximally violated by a single EPR pair. When one allows three settings, the only new independent inequality (i.e., which can be violated by a state not violating the CHSH inequality) is the one we consider~\cite{CG}.

More specifically, the inequality we consider is the well-known $I_{3322}$ inequality, first introduced in~\cite{Froissart1981}. We will use 
$\{A_j\}_{j \in \{1,2,3\}}$ and $\{B_k\}_{k \in \{1,2,3\}}$ to denote the measurement operators for the first of the two possible outcomes for Alice and Bob respectively.
Using the common shorthands 
\begin{align}
	\langle A_j B_k\rangle &:= \bra{\Psi}A_j \otimes B_k \ket{\Psi}\ ,\\
	\langle A_j \rangle &:= \bra{\Psi}A_j \otimes \id \ket{\Psi}\ ,\\
	\langle B_k\rangle &:= \bra{\Psi} \id \otimes B_k \ket{\Psi}\ ,
\end{align}
we define
\begin{align}
\langle I_{3322} \rangle := &-\langle A_2 \rangle - \langle B_1 \rangle - 2 \langle B_2 \rangle + \langle A_1 B_1 \rangle \nonumber\\
&+ \langle A_1 B_2 \rangle
+ \langle A_2 B_1 \rangle + \langle A_2 B_2 \rangle - \langle A_1 B_3 \rangle \nonumber\\
&+ \langle A_2 B_3 \rangle
- \langle A_3 B_1 \rangle + \langle A_3 B_2 \rangle
	\label{eq:I3322}
\end{align}
While for classical correlations we have 
\begin{align}\label{eq:classicalBound}
	\langle I_{3322} \rangle \leq 0\ ,
\end{align}
there exist measurements~\cite{CG} such that using just one EPR pair (i.e. $\ket{\Psi} = \ket{\Psi_2}$) one can get
\begin{align}\label{eq:oneEPR}
	\langle I_{3322} \rangle = \frac{1}{4}\ .
\end{align}
Yet, the precise maximum of $\langle I_{3322} \rangle$ over all quantum states and measurements remains unknown. Numerical upper-bounds were obtained using a SDP hierarchy in~\cite{as:qmp}. This was followed by recent exhaustive numerical investigations by P\'al and V\'ertesi~\cite{Pal2010}, who report very interesting results. 
Their experiments suggest that the optimum violation of~\eqref{eq:classicalBound}, even though it only involves a constant number of settings and outcomes, might only be reached in infinite dimension. Indeed, they find strategies obtaining a value of at least $0.25084...$ (matching the upper bound up to precision $10^{-7}$ in dimension $\approx 100$), and moreover in their experiments this value keeps increasing as the dimension of the strategies is allowed to increase.
Moreover, even though the observables which achieve the maximum violation in a given dimension have a rather simple and systematic form, 
the corresponding state has an interesting distribution of Schmidt coefficients, and it is quite far from the maximally entangled state. 
Hence their results suggest a simple inequality for which the maximally entangled state of \emph{any dimension} may not permit the largest violation, and moreover that the
maximal violation cannot be attained with any finite-dimensional state.

Here, we prove that indeed the maximally entangled state does not lead to the optimal violation of even such a simple inequality. 
That is,
\begin{theorem}\label{lem:main} For all dimensions $d \geq 0$, and any observables, using the maximally entangled state $\ket{\Psi} = \ket{\Psi_d}$ can lead to a violation
	of at most
	\begin{align}
		\langle I_{3322} \rangle \leq \frac{1}{4}\ .
	\end{align}
\end{theorem}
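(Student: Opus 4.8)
The plan is to use the algebra of the maximally entangled state to reduce the bipartite inequality to a statement about the trace of a single operator on $\C^d$, and then to certify that statement in a way that genuinely accounts for the non-commutativity of the observables.

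The first step is the \emph{transpose trick}: for any operator $X$ on $\hil_A\cong\C^d$ one has $(X\otimes\id)\ket{\Psi_d}=(\id\otimes X^{T})\ket{\Psi_d}$, the transpose being taken in the Schmidt basis. Writing $\tilde B_k:=B_k^{T}$, which is again a valid measurement operator since transposition preserves Hermiticity and the spectrum, this gives $\langle A_j\rangle=\tfrac1d\Tr A_j$, $\langle B_k\rangle=\tfrac1d\Tr\tilde B_k$ and $\langle A_jB_k\rangle=\tfrac1d\Tr(A_j\tilde B_k)$, so \eqref{eq:I3322} collapses to $\langle I_{3322}\rangle=\tfrac1d\Tr(\mathcal B)$, where $\mathcal B$ is a fixed degree-two polynomial in the six single-system operators $A_1,A_2,A_3,\tilde B_1,\tilde B_2,\tilde B_3$ on $\C^d$. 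Since $B_k\mapsto\tilde B_k$ is onto the set of POVM elements, it suffices to prove $\Tr(\mathcal B)\le\tfrac14 d$ for all $0\preceq A_j\preceq\id$ and $0\preceq\tilde B_k\preceq\id$ on an arbitrary $\C^d$; one may moreover take the measurements projective, which changes nothing in what follows.

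Now group the terms of $\mathcal B$ according to which $\tilde B_k$ they contain: $\Tr(\mathcal B)=\Tr(X_1\tilde B_1)+\Tr(X_2\tilde B_2)+\Tr(X_3\tilde B_3)-\Tr A_2$ with $X_1=A_1+A_2-A_3-\id$, $X_2=A_1+A_2+A_3-2\id$, $X_3=A_2-A_1$. Maximising each $\tilde B_k$ separately over $[0,\id]$ reduces the theorem to the single-system inequality
\[
\Tr[(X_1)_+]+\Tr[(X_2)_+]+\Tr[(X_3)_+]-\Tr A_2\ \le\ \tfrac14 d ,\qquad 0\preceq A_j\preceq\id ,
\]
where $(\cdot)_+$ is the positive part. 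Equivalently --- and this is the version I would actually try to establish --- one seeks an algebraic certificate of the form $\tfrac14\id-\mathcal B_{\mathrm{sym}}=\sum_i L_i^{\dagger}L_i+\sum_j[M_j,N_j]$, where $\mathcal B_{\mathrm{sym}}$ replaces each product $A_j\tilde B_k$ in $\mathcal B$ by $\tfrac12(A_j\tilde B_k+\tilde B_kA_j)$ (harmless under the trace), the $L_i,M_j,N_j$ are low-degree polynomials in $A_1,\dots,\tilde B_3$, and the identity uses only the relations $A_j=A_j^{\dagger}=A_j^2$, $\tilde B_k=\tilde B_k^{\dagger}=\tilde B_k^2$ and $A_j,\tilde B_k\succeq0$. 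Taking the trace annihilates the commutators and leaves $\Tr(\tfrac14\id-\mathcal B)=\Tr(\tfrac14\id-\mathcal B_{\mathrm{sym}})\ge0$, which is the claim.

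The hard part is exactly this certification, and it is hard for a structural reason. If $A_1,A_2,A_3$ pairwise commuted, the displayed expression would be at most the classical value $0$; the entire gap up to $\tfrac14$ is produced by non-commuting contributions, so no eigenvalue-by-eigenvalue bound can work. In particular the naive hope $\tfrac14\id\succeq\mathcal B$ is \emph{false}: taking $A_j=a_j\otimes\id$ and $B_k=\id\otimes b_k$ on a factorised $\C^{d}$ turns $\mathcal B$ into a genuine bipartite $I_{3322}$ Bell operator, and such operators have largest eigenvalue exceeding $\tfrac14$ for the observables realising the better, less-entangled strategies (cf.~\cite{Pal2010}). One must therefore really use that commutators are traceless. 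Because there are only three settings per party, the ambient operator algebra is small, so I expect the certificate to be obtainable, and then verifiable by hand, from a low level of an NPA-style semidefinite relaxation; by the same token, a direct attack on the positive-part inequality above cannot pass to a common eigenbasis of $A_1,A_2,A_3$, and it is in handling that failure that the constant $\tfrac14$ (rather than $0$) must emerge. Locating and checking this certificate is, to my mind, the main obstacle.
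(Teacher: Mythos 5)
Your opening reduction is sound and coincides with the paper's starting point: the transpose trick gives $\langle I_{3322}\rangle=\frac1d\Tr(\mathcal B)$, and optimizing each $\tilde B_k$ over $[0,\id]$ correctly yields the single-system inequality $\Tr[(A_1+A_2-A_3-\id)_+]+\Tr[(A_1+A_2+A_3-2\id)_+]+\Tr[(A_2-A_1)_+]-\Tr A_2\le\frac d4$; your observation that $\frac14\id-\mathcal B$ cannot be operator-positive (because commuting embeddings reproduce a genuine $I_{3322}$ Bell operator with top eigenvalue above $\frac14$) is also correct and shows you see why the problem is genuinely noncommutative. But at exactly the point where the theorem has to be proved, the proposal stops: you do not exhibit the tracial certificate $\frac14\id-\mathcal B_{\mathrm{sym}}=\sum_iL_i^\dagger L_i+\sum_j[M_j,N_j]$, you only ``expect'' it to appear at a low level of an NPA-type relaxation. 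That expectation is not backed by any completeness theorem. A sum-of-squares-plus-commutators identity modulo the projector relations would in fact certify the stronger statement that the bound $\frac14$ holds for \emph{every} tracial state on the universal algebra (not just finite-dimensional normalized traces, which is what the maximally entangled state gives), and trace-positive noncommutative polynomials are not in general sums of hermitian squares and commutators; so the certificate you need might not exist even if the theorem is true, and nothing in the proposal rules this out. Since the entire quantitative content of the theorem --- why $\frac14$ and not, say, $0.2509$ --- is carried by that missing certificate, the argument as written is a reduction plus a research plan, not a proof.

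For comparison, the paper never attempts a noncommutative certificate. After the same transpose-trick reduction and the same observation that the optimal $A_3,B_3$ project onto positive eigenspaces, it applies the CS decomposition to the pairs $(A_1,A_2)$ and $(B_1,B_2)$, shows via Claim~\ref{claim:basis} and an irreducibility argument that an optimal strategy can be brought into the shifted-block joint normal form of Definition~\ref{def:normalform} (Lemma~\ref{lem:blocks}), and thereby eliminates the noncommutativity entirely: the value collapses to the \emph{scalar} expression~\eqref{eq:omegaintro} in the principal-angle cosines $c_i$. Only then does it invoke Positivstellensatz/SOS machinery, and only for a commutative function of two or three real variables (Claim~\ref{claim:numerics}), where low-level certificates are small and are written out explicitly. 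If you want to salvage your route, you would have to either produce and verify the noncommutative certificate explicitly (and argue it exists), or follow the paper in first taming the operators structurally --- e.g.\ via the CS decomposition applied to your positive-part expression --- before resorting to any SDP-based bound.
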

Note that in contrast with previous work,~\eqref{eq:oneEPR} tells us that a value of $1/4$ can be attained using just one EPR pair, and hence the maximally entangled state in any dimension is no more powerful
than the maximally entangled state for $d=2$.
This definitively demonstrates, in the simplest possible setting, that maximally entangled states are not the most nonlocal.

\subsection{Generic states}

Before embarking on our proof, it is worth pointing out that there does in fact exist a generic family of states that always allow us to obtain the maximum violation for \emph{any} Bell inequality.
These states, however, exhibit less entanglement than the maximally entangled state of same dimension. This ``universal'' family of states are known as \emph{embezzlement states}~\cite{patrick:embezzle}. They previously played an important role
in more involved tasks in quantum information theory, namely the so-called quantum reverse Shannon theorem~\cite{Bennett2009,mario:reverseShannon}, 
which provided another example where the maximally entangled state is not
sufficient to achieve the corresponding channel simulation result, but the universal embezzlement states are. 
The key property of the $d$-dimensional embezzlement state $\ket{\Phi_d}$ that is used is that, for \emph{any} pure state $\ket{\Psi}$, there exists $d$ and $d'$ such that $\ket{\Phi_d} \approx \ket{\Phi_{d'}}\otimes \ket{\Psi}$, where the equivalence only requires the application of local unitaries on each system; no communication is 
needed~\cite{patrick:embezzle}. Since an embezzlement state can be used to obtain any other pure state by local unitary operations, it immediately follows
that any Bell inequality can be maximally violated by an embezzlement state (of possibly higher dimension), as pointed out recently in~\cite{Oliveira2010}.
This demonstrates that, even though in small dimensions it might seem like every inequality has its own specialized maximizing state, if one allows the dimension to grow larger, then a simple class of states is sufficient to obtain maximal violations.  

\section{Using the maximally entangled state}

We now give a detailed overview of the proof of our main result (Theorem~\ref{lem:main}), relegating technical details to the appendix. 
Throughout we will refer to a particular choice of measurements applied to the maximally entangled state as a \emph{strategy}. Since our game is binary, it is known that  
we may assume without loss of generality (and without affecting the underlying state) that the operators used by Alice and Bob are projectors~\cite[Proposition 2]{Cleve2004}, and we will denote them by
$\{A_j,\id - A_j\}$ for Alice and $\{B_k,\id - B_k\}$ for Bob.
We will also refer to
\begin{align}
	\omega := \langle I_{3322} \rangle 
\end{align}
as the \emph{value} of a particular strategy.
Our goal is to show that $\omega$ is at most $1/4$, irrespective of the dimension $d$. 
We first introduce an important tool in our analysis, the CS decomposition of a pair of projectors. This decomposition was also at the heart of the results in~\cite{masanes05:_extrem}, where it was used to handle the case of only \emph{two} observables per site.

{\bf The CS decomposition.} Given a pair of $d$-dimensional projectors $P$ and $Q$, there exists an orthonormal basis in which the two projectors are jointly block-diagonal (see for instance~\cite{Bhatia:97a}). Moreover, the blocks can be either $1$-dimensional, in which case $P$ and $Q$ either have a $0$ or a $1$ in that block, or $2$-dimensional, in which case they can be written in the form
\begin{align}
P &= \frac{1}{2}\begin{pmatrix} 1-c & -s \\ -s & 1+c \end{pmatrix}\ ,\label{eq:cs1}\\
Q &= \frac{1}{2}\begin{pmatrix} 1-c & s \\ s & 1+c \end{pmatrix}\ ,\label{eq:cs2}
\end{align}
for some coefficients $c\in(-1,1)$ and $s=\sqrt{1-c^2}$. The angles $\theta$ such that $c=\cos \theta$ are called the \emph{principal angles} between the subspaces on which $P$ and $Q$ project. 

\medskip

Our proof proceeds in two steps. Step 1 is to show that we can greatly simplify the form of Alice's and Bob's measurement operators. 
The main idea is to show using the CS decomposition that for any strategy maximizing~\eqref{eq:I3322} there exists a basis in which all measurements are 
tridiagonal~\footnote{A matrix is tridiagonal if its only non-zero entries are on the main diagonal and the two diagonals right above and under it.}.
This lets us greatly reduce the number of parameters and give a relatively simple analytic expression for the value $\omega$ of the strategy.
Step 2 consists in upper-bounding this simple expression using standard analytic techniques.
 
\subsection{Step 1: A simple joint normal form}

This is arguably the most crucial step in our proof, as it lets us show that a completely arbitrary strategy 
given by projectors $\{A_j,B_k\}_{j,k=1,\ldots,3}$ can be put into a much simpler form without decreasing its value. 
As we mentioned previously, the key idea is to apply the CS decomposition twice, once to the pair $(A_1,A_2)$, and once to the pair $(B_1,B_2)$. This results in two orthonormal bases $\mathcal{B}_A$ and $\mathcal{B}_B$ such that the matrices of $(A_1,A_2)$ in $\mathcal{B}_A$ are block-diagonal, with blocks of the form~\eqref{eq:cs1} for $A_1$ and~\eqref{eq:cs2} for $A_2$, and similarly for $(B_1,B_2)$ in $\mathcal{B}_B$. We number the blocks of $(A_1,A_2)$ using even indices $2,\ldots,d$ and call the corresponding coefficients $c_{2i},s_{2i}$; the blocks of $(B_1,B_2)$ are numbered using odd indices $1,\ldots,d+1$ and corresponding coefficients $c_{2i+1},s_{2i+1}$.

In general the bases $\mathcal{B}_A$ and $\mathcal{B}_B$ are unrelated, but we argue that, under the condition that the strategy maximizes~\eqref{eq:I3322}, they must in fact be permutations of one another. 
To see this, note that \eqref{eq:I3322} can be re-written as
\begin{align}
&\langle I_{3322} \rangle =\nonumber \\
& \langle A_1+A_2 , B_1+B_2 \rangle + \langle A_2-A_1,B_3\rangle + \langle A_3,B_2-B_1\rangle\nonumber\\
 &-\, \langle A_2,\id\rangle - \langle \id,B_1\rangle -2\langle \id,B_2\rangle \label{eq:secondPart}
\end{align}
where 
\begin{align}
	\langle A,B\rangle = \frac{1}{d}\Tr(A^TB)
\end{align}
and we used that if $\ket{\Psi}$ is the maximally entangled state then
\begin{align}
	\bra{\Psi} A\otimes B \ket{\Psi} = \langle A,B\rangle\ .
\end{align}
Note that since the $A_j$ operators always appear on the left of the tensor product (Alice's side), we will henceforth argue about $A_j^T$ rather than $A_j$, omitting the transpose sign for simplicity of notation. 
For the moment, let's ignore the contribution of the last three terms in~\eqref{eq:secondPart}.  Observe that $A_3$ (resp. $B_3$) only appears in the term $\langle A_3,B_2-B_1\rangle$ (resp. $\langle A_2-A_1,B_3\rangle$). When maximizing over $A_3$ it is thus clear that the optimal choice is to make 
$A_3$ the projector onto the positive eigenspace of 
$B_2-B_1$ (resp. $B_3$ to project on the positive eigenspace of $A_2-A_1$). 
This in particular implies that the value of those two terms is \emph{independent} of the choice of $\mathcal{B}_B$ (resp. $\mathcal{B}_A$). 
Hence the choice of the bases $\mathcal{B}_A$, $\mathcal{B}_B$ only bears influence on the value of the first term in~\eqref{eq:secondPart}. 

Let us now examine the first term. Note that the precise form~\eqref{eq:cs1},~\eqref{eq:cs2} in which we wrote the CS decomposition ensures that $A_1+A_2$ is diagonal in $\mathcal{B}_A$ (resp. $B_1+B_2$ in $\mathcal{B}_B$). It is well known (see Claim~\ref{claim:basis} in the appendix) that $\langle A_1+A_2 , B_1+B_2 \rangle$ is maximized whenever the vectors in $\mathcal{B}_B$ are a permutation of those in $\mathcal{B}_A$.
It follows that for the optimal choice of bases $A_1+A_2$ and $B_1+B_2$ will necessarily be simultaneously diagonal. 

However, this does not necessarily imply that the blocks of $(A_1,A_2)$ are aligned with those of $(B_1,B_2)$, as corresponding pairs of basis vectors need not match ---  in fact, if they did, then it is not hard to see that the strategy would be reduced to a convex combination of $2$-dimensional strategies, which would conclude our proof. Nevertheless, by a simple argument we can show that without loss of generality the blocks are simply ``shifted'': there exists an ordering of $\mathcal{B}_A = \{e_1,\ldots,e_d\}$ such that if the blocks of $(A_1,A_2)$ correspond to pairs $(e_1,e_2),(e_3,e_4),\ldots$ then those of $(B_1,B_2)$ can be seen to correspond to pairs $(e_d,e_1),(e_2,e_3),\ldots$. 

The exact form we obtain for the strategies is given in Definition~\ref{def:normalform} in the Appendix, and gaps in the argument above are filled in the proof of Lemma~\ref{lem:blocks}, which can informally be summarized as follows.

\begin{lemma}[Lemma~\ref{lem:blocks}, informal] There exists a basis $(e_1,\ldots,e_d)$ in which
\begin{itemize}
\item $(A_1,A_2,B_3)$ (resp. $(B_1,B_2,A_3)$) are jointly block-diagonal.
\item The blocks corresponding to each of these decompositions are shifted: blocks of $(A_1,A_2,B_3)$ correspond to pairs $(e_{2i-1},e_{2i})$, while blocks of $(A_1,A_2,B_3)$ correspond to pairs $(e_{2i},e_{2i+1})$.
\item  The blocks of $(A_1,A_2)$ are of the form~\eqref{eq:cs1},~\eqref{eq:cs2} with coefficients $(c_{2i},s_{2i})$, $i=1,\ldots,d/2$, while those of $(B_1,B_2)$ are of the same form with corresponding coefficients $(c_{2i+1},s_{2i+1})$, $i=0,\ldots,d/2-1$. 
\end{itemize}
\end{lemma}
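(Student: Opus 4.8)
The plan is to carry out the three reductions that are sketched just above the statement, making each of them precise.

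\textbf{Step 1 (normalizing $A_3,B_3$ and applying the CS decomposition).} First I would use that in~\eqref{eq:secondPart} the operator $A_3$ appears only in $\langle A_3,B_2-B_1\rangle$ and $B_3$ only in $\langle A_2-A_1,B_3\rangle$, each a quantity of the form $\frac1d\Tr(XY)$ with $Y$ a projector; hence an optimal strategy may be assumed to have $A_3$ the projector onto the positive eigenspace of $B_2-B_1$ and $B_3$ that of $A_2-A_1$. Next I would apply the CS decomposition to $(A_1,A_2)$, obtaining a basis $\mathcal B_A$ in which both are block-diagonal with $2$-dimensional blocks of the form~\eqref{eq:cs1},~\eqref{eq:cs2}. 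On such a block $A_1+A_2$ is diagonal and $A_2-A_1$ has eigenvalues $\pm s$ with positive eigenvector $\tfrac1{\sqrt2}(1,1)$, so the normalized $B_3$ is block-diagonal in the \emph{same} basis $\mathcal B_A$ (on a $1$-dimensional block it is a scalar, and where $A_2-A_1=0$ its value is irrelevant). Thus $(A_1,A_2,B_3)$ are jointly block-diagonal in $\mathcal B_A$; symmetrically, $(B_1,B_2,A_3)$ are jointly block-diagonal in the basis $\mathcal B_B$ coming from the CS decomposition of $(B_1,B_2)$.

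\textbf{Step 2 (aligning $\mathcal B_A$ and $\mathcal B_B$).} Here I would note that in~\eqref{eq:secondPart} every term except $\langle A_1+A_2,B_1+B_2\rangle=\frac1d\Tr((A_1+A_2)(B_1+B_2))$ depends only on the CS coefficients of $(A_1,A_2)$, or only on those of $(B_1,B_2)$ --- using that $A_3,B_3$ are now functions of the other pair. Replacing $(B_1,B_2)$ by a pair with the same CS coefficients but an arbitrary rotated eigenbasis, and re-optimizing $A_3$ accordingly, leaves all those terms unchanged, while by the rearrangement inequality (Claim~\ref{claim:basis}) the remaining coupling term is maximized exactly when $B_1+B_2$ is diagonal in $\mathcal B_A$ (with a matching ordering of eigenvalues). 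Optimality then forces $\mathcal B_B$ to be a permutation of $\mathcal B_A=:(e_1,\ldots,e_d)$.

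\textbf{Step 3 (the shift).} Finally I would build the graph on $\{e_1,\ldots,e_d\}$ with an $A$-edge for each $2$-dimensional block of $(A_1,A_2)$ and a $B$-edge for each $2$-dimensional block of $(B_1,B_2)$. Every vertex meets at most one edge of each colour, so the components are isolated vertices, single edges, alternating paths and alternating even cycles; since all six operators respect this partition, $\omega$ is a convex combination of the values of the sub-strategies borne by the components, so it suffices to treat each type. Isolated vertices give purely classical, hence nonpositive, contributions; single edges and paths are low-dimensional and disposed of by direct inspection; and an alternating even cycle of length $2m$, after relabelling its vertices $e_1,\ldots,e_{2m}$ so that the $A$-blocks are $(e_1,e_2),(e_3,e_4),\ldots$ and the $B$-blocks are $(e_2,e_3),(e_4,e_5),\ldots,(e_{2m},e_1)$, is exactly the asserted normal form, with coefficients $(c_{2i},s_{2i})$ and $(c_{2i+1},s_{2i+1})$ read off from~\eqref{eq:cs1},~\eqref{eq:cs2}.

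I expect Step 3 to be the main obstacle: one has to verify that the non-cyclic components can truly be absorbed or bounded on their own, so that the clean ``single shifted cycle'' picture is genuinely without loss of generality, and one has to keep the sign conventions of~\eqref{eq:cs1}--\eqref{eq:cs2} consistent with the choice of $A_3,B_3$ as positive-eigenspace projectors after the cyclic relabelling. A lesser point, in Step 2, is to check that re-optimizing $A_3$ after rotating Bob's basis cannot lower a term already accounted for, which is clear since $A_3$ enters a single term and is set to its maximizer.
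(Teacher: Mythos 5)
Your Steps 1 and 2 coincide with the paper's own proof of Lemma~\ref{lem:blocks}: take $A_3$ and $B_3$ to be the projectors onto the positive eigenspaces of $B_2-B_1$ and $A_2-A_1$, apply the CS decomposition to $(A_1,A_2)$ and $(B_1,B_2)$, observe that only the term $\langle A_1+A_2,B_1+B_2\rangle$ depends on the relation between the two bases, and invoke Claim~\ref{claim:basis} to make the bases equal up to permutation. Your Step 3, decomposing the basis into alternating components and using convexity to restrict to a single irreducible component, is also in the spirit of the paper: the proof of Claim~\ref{claim:irred} is exactly the alternating path-following you describe.

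However, the conclusion you draw in Step 3 has the roles of paths and cycles interchanged, and this is a genuine gap. An alternating path is not low-dimensional --- it can involve arbitrarily many basis vectors --- and it is precisely the joint normal form of Definition~\ref{def:normalform}: $(A_1,A_2)$ carry only $2$-dimensional blocks on the pairs $(e_{2i-1},e_{2i})$, while $(B_1,B_2)$ carry the shifted $2$-dimensional blocks together with two $1$-dimensional blocks at the ends of the path (these end blocks are exactly what produces the boundary term $(c_1-c_{d+1})/(2d)$ in Lemma~\ref{lem:bound}). So the paths are the case that carries the whole content of the lemma and cannot be ``disposed of by direct inspection''; dismissing them leaves nothing to feed into Step 2 of the paper. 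Conversely, the alternating even cycle is \emph{not} the asserted normal form: its $B$-pairing contains the wrap-around block on $(e_d,e_1)$, so in the ordered basis the operators are not tridiagonal and Definition~\ref{def:normalform} does not cover it; the informal ``shifted pairs'' phrasing refers to the path, with the wrap-around pair degenerating into the two $1$-dimensional end blocks. The cycle configuration (all blocks $2$-dimensional, which Claim~\ref{claim:irred} explicitly allows for even dimension) is the delicate case you correctly anticipated, but your proposal resolves it backwards: one must argue that such a component can be cut open, reduced, or separately bounded so as to land in the path form, rather than declare it to be the normal form and discard the paths. As written, Step 3 therefore does not establish the statement.
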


\subsection{Step 2: The value of a strategy in joint normal form}

 Once we have found a nice basis in which to express all observables appearing in the strategy, it should appear as no surprise that the value of~\eqref{eq:I3322} should be easily expressible as 
 a function of the coefficients $(c_i)_{i=1,\ldots,d}$, since these are the only free parameters left in our choice of strategy. In fact, fixing coefficients $c_i$ where $i$ is even, it 
 is not hard to determine the optimal choice of coefficients $c_i$ for odd $i$. This reduces the size of our problem to the $d/2$ parameters $c_2,\ldots,c_{d}$.
 One can then show that the strategy has the following value (cf. Lemma~\ref{lem:bound} for a more precise statement):
 \begin{align}\label{eq:omegaintro}
 \omega = \frac{1}{d}\sum_{i=1}^{d/2} f(c_{2i-1},c_{2i+1}) + \frac{c_1-c_{d+1}}{2d}\ ,
 \end{align}
where 
\begin{align*}
	f(x,y)
	= \sqrt{(x+y)^2+1}+ \frac{1}{2}\sqrt{1-x^2} + \frac{1}{2}\sqrt{1-y^2} - 2\ .  
\end{align*}
We have thus rephrased the problem of maximizing $\langle I_{3322} \rangle$ over all strategies in terms of maximizing  $\omega$ over 
all admissible coefficients $(c_{2i-1})_{i=1,\ldots,d/2+1}$. To prove our claim, it only remains to prove an upper bound on $\omega$, which can be done using standard 
analytical techniques provided in the appendix.

\begin{lemma}\label{lem:numerics} 
	Let $c_{2i-1} \in [-1,1]$, for $i=1,\ldots, d/2+1$. Then the expression $\omega = \omega(c_i)$ in~\eqref{eq:omegaintro} is upper-bounded by $\frac{1}{4}$.
\end{lemma}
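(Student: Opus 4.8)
The plan is to bound the function $\omega$ given in~\eqref{eq:omegaintro} by controlling each summand $f(c_{2i-1},c_{2i+1})$ separately, then handling the boundary term $\tfrac{c_1-c_{d+1}}{2d}$ via a telescoping trick. First I would study the single-variable auxiliary function $g(x,y):=f(x,y)+\tfrac{x-y}{2}$ (or a symmetrized variant), the point being that summing $g(c_{2i-1},c_{2i+1})$ over $i$ both reproduces $\sum_i f(c_{2i-1},c_{2i+1})$ and, by telescoping, produces exactly the leftover term $\tfrac{c_1-c_{d+1}}{2}$, so that $d\,\omega = \sum_{i=1}^{d/2} g(c_{2i-1},c_{2i+1})$. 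It then suffices to prove the clean pointwise bound $g(x,y)\le \tfrac12$ for all $x,y\in[-1,1]$, since summing $d/2$ copies and dividing by $d$ gives $\omega\le\tfrac14$.

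The heart of the argument is therefore the two-variable inequality
\begin{align*}
\sqrt{(x+y)^2+1}+\tfrac12\sqrt{1-x^2}+\tfrac12\sqrt{1-y^2}+\tfrac{x-y}{2}-2\ \le\ \tfrac12 .
\end{align*}
I would attack this by elementary calculus: parametrize $x=\cos\alpha$, $y=\cos\beta$ if convenient, or simply compute the gradient of the left-hand side and locate the interior critical points. The dependence on $(x,y)$ splits into the coupling term $\sqrt{(x+y)^2+1}$, which depends only on the sum $s=x+y$ and is convex in $s$, and the decoupled terms $\tfrac12\sqrt{1-x^2}+\tfrac12\sqrt{1-y^2}+\tfrac{x-y}{2}$; for fixed $s$ the latter is concave in the difference $t=x-y$, so one first maximizes over $t$ with $s$ fixed (a one-variable concave problem with an explicit solution), reducing to a single-variable maximization over $s\in[-2,2]$ whose maximum can be checked to be $\tfrac12$, attained at an interior point together with the boundary behavior.

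The main obstacle I anticipate is the optimization over the odd-indexed coefficients that was already folded into~\eqref{eq:omegaintro} — i.e.\ justifying that the reduction to $g$ is tight — together with the bookkeeping that the indices $c_{2i-1}$ and $c_{2i+1}$ share endpoints between consecutive blocks, so that the "separate each summand" step is not obviously valid: a choice of $c_{2i-1}$ maximizing the $i$-th block also appears in the $(i\!-\!1)$-st block. This is precisely what the telescoping reformulation via $g$ is designed to circumvent, since $g(x,y)\le\tfrac12$ holds for \emph{all} pairs independently, with no shared-variable constraint. The remaining work — verifying that the critical point of the reduced one-variable function indeed attains value $\tfrac12$ and that no larger value occurs on the boundary of $[-1,1]^2$ — is routine and is exactly the "standard analytical techniques provided in the appendix" alluded to above.
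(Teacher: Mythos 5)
Your telescoping identity is fine as algebra: since $\sum_{i=1}^{d/2}\tfrac{c_{2i-1}-c_{2i+1}}{2}=\tfrac{c_1-c_{d+1}}{2}$, one indeed has $d\,\omega=\sum_i g(c_{2i-1},c_{2i+1})$ with $g(x,y):=f(x,y)+\tfrac{x-y}{2}$. The gap is that the pointwise bound $g(x,y)\le\tfrac12$ on $[-1,1]^2$, which carries your whole argument, is false. The bound $f\le\tfrac12$ is \emph{tight}, attained at $x=y=\sqrt3/2$, and at that point a perturbation $x=\sqrt3/2+\varepsilon$, $y=\sqrt3/2-\varepsilon$ leaves $x+y$ unchanged and changes $\tfrac12\sqrt{1-x^2}+\tfrac12\sqrt{1-y^2}$ only to second order, while your linear term contributes $+\varepsilon$; hence $g=\tfrac12+\varepsilon-O(\varepsilon^2)>\tfrac12$. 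Concretely, $g(0.95,0.7)\approx 0.568$. So distributing the boundary term uniformly over the blocks cannot work, and the ``routine'' final optimization you defer would in fact have exposed this: once the linear term in $t=x-y$ is present, the inner maximization no longer returns $t=0$.

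The failure is structural, not a fixable detail. Any proof that bounds each pair separately while treating $c_1,c_{d+1}$ as free points of $[-1,1]$ is doomed, because under that liberal reading the statement itself fails: for $d=2$, $c_1=0.95$, $c_3=0.7$ the expression gives $\omega=\tfrac12 f(c_1,c_3)+\tfrac{c_1-c_3}{4}\approx 0.284>\tfrac14$ (this is exactly your counterexample pair). The hidden hypothesis, coming from the joint normal form (Definition~\ref{def:normalform}), is that $c_1,c_{d+1}\in\{-1,1\}$ since they label one-dimensional projector blocks, and the paper's proof of Lemma~\ref{lem:num2} uses it. The paper's argument is then necessarily global: the boundary term is positive only when $c_1=1$ and $c_{d+1}=-1$, in which case the sequence of odd coefficients must cross zero, i.e.\ there is an $i_0$ with $c_{2i_0-1}+c_{2i_0+1}\ge 0\ge c_{2i_0+1}+c_{2i_0+3}$; near that crossing and at the two endpoints the terms $f$ are bounded strictly below $\tfrac12$ (the bounds $.244$, $.103$, $.368$ of Claim~\ref{claim:numerics}, themselves established by sum-of-squares certificates), and this deficit is what absorbs the extra $1/d$. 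To repair your approach you would have to use both the endpoint constraint and this sign-change structure, at which point you are essentially reproducing the paper's case analysis.
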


\section{Conclusion and open questions}

We have provided a concrete example of a simple inequality for which it can be shown that the maximally entangled state of any dimension is not the most nonlocal state. An interesting question, already asked in~\cite{Pal2010}, is whether one can show that optimal violation of the $I_{3322}$ inequality requires a state of infinite dimension. This is strongly suggested by the strategies found numerically by Pal and Vertesi, which, even though they are based on an entangled state which is very far from the maximally entangled state, have a matrix form which is quite similar to the one in Def.~\ref{def:normalform}. 
Extending our argument to show that Alice and Bob's measurements always have this form, even when they do not use the maximally entangled state, 
would be a big step towards proving that no finite-dimensional strategy is optimal~\cite{salman:personal}.
This would not only have very interesting consequences for our understanding of Bell inequalities, but also for the optimization of polynomials with non-commutative variables.
In particular, it would imply that the SDP hierarchies suggested in~\cite{npa1,npa2,as:qmp} only converge in the limit of infinitely many levels, which is an open problem even outside 
the realm of quantum information.

\smallskip
\noindent
{\bf Note:} After posting our work to the arXiv, we learned about related work by Liang et al.~\cite{Liang2010}, which by now has also appeared
on the arXiv. In a different context and independently of our results, they found a two-setting, two-outcome 
linequality for which the maximally entangled state of any dimension is not the most nonlocal. Recall, however, that in that case it is already known that 
 local dimension $2$ is sufficient for both Alice and Bob, and hence it suffices to determine whether maximal entanglement in that fixed dimension 
is necessary to obtain the maximum violation. 
The motivation for their work, however, is in a different context in which such inequalities are indeed interesting. 

\acknowledgments

We thank Salman Beigi for interesting discussions, and Oded Regev for helpful comments. TV was supported by ARO Grant W911NF-09-1-0440 and NSF Grant CCF-0905626. 
SW was supported by the National Research Foundation (Singapore), and the Ministry of Education (Singapore). 
TV is grateful to CQT, Singapore, for hosting him while part of this work was done.

\bibliography{I3322}

\appendix

\section{A joint normal form for strategies using the maximally entangled state}

The goal of this section is to prove Lemma~\ref{lem:blocks}, which shows that any optimal strategy must have a certain simple joint normal form. Before we define it precisely, note that in order for the strategy $\{A_j,B_k\}_{j,k=1,\ldots,3}$ to be optimal, for a fixed choice of $\{B_k\}$ it is necessary that the operators $\{A_j\}$ be chosen so as to maximize 
\begin{align}
&\bra{\Psi} A_1 \otimes (B_1+B_2-B_3) \ket{\Psi} \label{eq:A1}\\
&\bra{\Psi} A_2 \otimes (B_1+B_2+B_3-\Id) \ket{\Psi} \label{eq:A2}\\
&\bra{\Psi} A_3 \otimes (B_2-B_1) \ket{\Psi} \label{eq:A3}
\end{align}
while for fixed $\{A_j\}$, the $\{B_k\}$ should maximize 
\begin{align}
&\bra{\Psi} B_1 \otimes (A_1+A_2-A_3-\Id) \ket{\Psi} \label{eq:B1}\\
&\bra{\Psi} B_2 \otimes (A_1+A_2+A_3-2\Id) \ket{\Psi} \label{eq:B2}\\
&\bra{\Psi} B_3 \otimes (A_2-A_1) \ket{\Psi} \label{eq:B3}
\end{align}
Since $\ket{\Psi}$ is the maximally entangled state, for any $A$ and $B$ we have $\bra{\Psi} A\otimes B \ket{\Psi} = \frac{1}{d}\Tr(AB^T) =: \langle A,B \rangle$, where $\langle \cdot,\cdot\rangle$ denotes the real Hilbert-Schmidt matrix inner product. To simplify notation, and since the $A_j$ operators always appear on the left of the tensor product (Alice's side), we will argue about $A_j^T$ rather than $A_j$, omitting the transpose sign.
Hence given for instance $B_1,B_2$ and $B_3$, the $A_1$ maximizing~\eqref{eq:A1} is simply the projector on the positive eigenspace of $B_1+B_2-B_3$.  In particular, if $B_1$, $B_2$ and $B_3$ have a joint block-diagonalization this will be reflected in $B_1+B_2-B_3$ and hence in $A_1$. This observation, combined with the CS decomposition for a pair of projectors, will let us find a simple joint form for all the $A_j$ and $B_k$, as explicited in the following definition.

\begin{definition}\label{def:normalform}
For any $c\in [-1,1]$, let $s=\sqrt{1-c^2}$ and define the $2$-dimensional projectors
\begin{align}
	P_{1}(c) &:= \frac{1}{2}\begin{pmatrix} 1-c & -s \\ -s & 1+c \end{pmatrix}\ ,\\
	P_{2}(c) &:= \frac{1}{2}\begin{pmatrix} 1-c & s \\ s & 1+c \end{pmatrix}\ ,\\
	P_3 &:= \frac{1}{2}\begin{pmatrix} 1 & 1 \\ 1 & 1 \end{pmatrix}\ .
\end{align}
We say that $d$-dimensional projectors $\{A_j,B_k\}$ are in \emph{joint normal form} if there exists a basis of $\C^d$ such that either
\begin{itemize} 
\item For even dimensions $d$, there exist reals $c_i \in [-1,1]$, $i=1,\ldots,d+1$ such that: 
\begin{itemize}
\item $A_1$ (resp. $A_2$) is block-diagonal with blocks $L_1^{2i} = P_1(c_{2i})$ (resp. $L_2^{2i} = P_2(c_{2i})$), $i=1\ldots d/2$
\item $B_3$ is block-diagonal with blocks all identical to $P_3$.
\item $B_1$ (resp. $B_2$) is block-diagonal, with the first block $R_1^1$ (resp. $R_2^1$) one-dimensional equal to $\big(\frac{1-c_{1}}{2}\big)$, the following $d/2-1$ blocks $R_1^{2i+1} = P_1(-c_{2i+1})$ (resp. $R_2^{2i+1} = P_2(-c_{2i+1})$), $i=1\ldots d/2-1$, and the last block $R_1^{d+1} = \big(\frac{1-c_{d+1}}{2}\big)$ (resp. $R_2^{d+1} = \big(\frac{1-c_{d+1}}{2}\big)$).
\item $A_3$ is block-diagonal with its first block one-dimensional equal to $(1)$, the following blocks all identical to $P_3$, and the last block one-dimensional equal to $(1)$.
\end{itemize}
\item For odd dimensions $d$, there exist reals $c_i \in [-1,1]$, $i=1,\ldots,d+1$ such that: 
\begin{itemize}
\item $A_1$ (resp. $A_2$) is block-diagonal with $(d-1)/2$ $2$-dimensional blocks $L_1^{2i} = P_1(c_{2i})$ (resp. $L_2^{2i} = P_2(c_{2i})$), $i=1\ldots (d-1)/2$, and a final $1$-dimensional block $L_1^{d+1}=\big(\frac{1-c_{d+1}}{2}\big)$ (resp. $L_2^{d+1}=\big(\frac{1-c_{d+1}}{2}\big)$), 
\item $B_3$ is block-diagonal with the first $(d-1)/2$ blocks all identical to $P_3$, and the last one $1$-dimensional equal to $(1)$.
\item $B_1$ (resp. $B_2$) is block-diagonal with an initial one-dimensional block $R_1^{1} = \big(\frac{1-c_{1}}{2}\big)$ (resp. $R_2^{1} = \big(\frac{1-c_{1}}{2}\big)$) and the following $(d-1)/2$ blocks $R_1^{2i+1} = P_1(-c_{2i+1})$ (resp. $R_2^{2i+1} = P_2(-c_{2i+1})$, $i=1\ldots (d-1)/2$.
\item $A_3$ is block-diagonal, with the first $1$-dimensional block equal to $(1)$, and all following blocks identical to $P_3$.
\end{itemize} 
\end{itemize}
Or the same as above, but with the roles of $\{A_1,A_2,B_3\}$ and $\{B_1,B_2,A_3\}$ exchanged.
\end{definition}

The main lemma of this section is the following:

\begin{lemma}\label{lem:blocks} Suppose $A_1,A_2,A_3$ and $B_1,B_2,B_3$ are six $d$-dimensional projectors achieving the maximum of~\eqref{eq:I3322} over all $d$-dimensional strategies using the maximally entangled state $\ket{\Psi}$. Then there is a $d'\leq d$, and a $d'$-dimensional strategy in joint normal form which achieves a value at least as large as that of $\{A_j,B_k\}$.
\end{lemma}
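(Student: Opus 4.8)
The plan is to extract the normal form by combining the first-order optimality conditions on the six projectors with the CS decomposition and a rearrangement inequality, and then running a short graph-theoretic argument that splits an optimal strategy into independent low-dimensional pieces. I would start from the rewriting~\eqref{eq:secondPart} together with the groupings~\eqref{eq:A1}--\eqref{eq:B3}: since $A_3$ appears linearly and only through $\langle A_3,B_2-B_1\rangle$, without loss of generality $A_3$ is the projector onto the positive eigenspace of $B_2-B_1$, symmetrically $B_3$ projects onto that of $A_2-A_1$, and $A_1,A_2$ (resp.\ $B_1,B_2$) are the positive-eigenspace projectors of the fixed combinations of the $B_k$'s (resp.\ $A_j$'s) appearing in~\eqref{eq:A1}--\eqref{eq:B3}. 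Consequently $\langle A_3,B_2-B_1\rangle$ and $\langle A_2-A_1,B_3\rangle$ equal $\tfrac1d$ times the sum of the positive eigenvalues of $B_2-B_1$, resp.\ $A_2-A_1$, and $\langle A_2,\id\rangle,\langle\id,B_1\rangle,\langle\id,B_2\rangle$ are normalized ranks.

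Next I would apply the CS decomposition to $(A_1,A_2)$, obtaining an orthonormal basis $\mathcal{B}_A$ and coefficients $\{c_{2i}\}$, and to $(B_1,B_2)$, obtaining $\mathcal{B}_B$ and $\{c_{2i+1}\}$. By the previous step every term of~\eqref{eq:secondPart} except $\langle A_1+A_2,B_1+B_2\rangle$ depends only on these coefficients and on the block structure, not on the relative position of $\mathcal{B}_A$ and $\mathcal{B}_B$; and since the CS form makes $A_1+A_2$ (resp.\ $B_1+B_2$) diagonal with nonnegative entries in $\mathcal{B}_A$ (resp.\ $\mathcal{B}_B$), the remaining term $\tfrac1d\Tr((A_1+A_2)(B_1+B_2))$ is, for fixed spectra, maximized when the two bases agree up to a permutation (Claim~\ref{claim:basis}). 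I may therefore assume a single basis $\mathcal{B}=(e_1,\dots,e_d)$ in which $A_1+A_2$ and $B_1+B_2$ are both diagonal, so that $(A_1,A_2)$ and $(B_1,B_2)$ are each block-diagonal in $\mathcal{B}$ with $1$- and $2$-dimensional blocks of the CS form; moving $B_1,B_2$ into $\mathcal{B}$ only permutes blocks, so re-optimizing $A_3,B_3$ afterwards leaves their spectrum-dependent contributions untouched and the value does not decrease.

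Then I would form the graph $G$ on $\{e_1,\dots,e_d\}$ whose edges join the two members of each $2$-dimensional block of either type. Each vertex lies in at most one $2$-dimensional $A$-block and at most one $2$-dimensional $B$-block, so $G$ has maximum degree two and is a disjoint union of isolated vertices, paths, and cycles along which $A$- and $B$-edges must strictly alternate (hence the cycles have even length). Each component spans a subspace invariant under $A_1,A_2,B_1,B_2$ and --- since the optimal $A_3$ (resp.\ $B_3$) is a function of $B_2-B_1$ (resp.\ $A_2-A_1$) --- also under $A_3$ (resp.\ $B_3$), so the strategy is a direct sum over components, $\langle I_{3322}\rangle=\tfrac1d\sum_C v_C$ with $\sum_C\dim C=d$ and $v_C$ depending only on $C$. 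Restricting to a component $C^\star$ with $v_{C^\star}/\dim C^\star$ maximal yields a $d'=\dim C^\star\le d$ dimensional strategy of value at least $\langle I_{3322}\rangle$; an isolated vertex gives a classical $1$-dimensional strategy of value $\le 0$, so $C^\star$ is a path or an even cycle. For a path, the alternation forces both endpoints into $1$-dimensional blocks and counting the $A$- and $B$-blocks reproduces exactly the even- or odd-$d$ lists of Definition~\ref{def:normalform}; the residual gauge freedom inside each $2$-dimensional block (conjugation by the block's diagonal $\pm1$, the relabeling $c\mapsto-c$) normalizes the blocks to the stated $P_1,P_2$ forms, and the positive-eigenspace description of $A_3,B_3$ puts these into the claimed $P_3$ and $(1)$ forms. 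For an even cycle, I would argue that replacing one $2$-dimensional $B$-block by two $1$-dimensional blocks cannot lower the value --- either directly from the value function underlying~\eqref{eq:omegaintro}, where a cycle and the resulting path differ only by the sign-adjustable boundary term $(c_1-c_{d+1})/2d$, or from the optimality condition on $B_1,B_2$ --- thereby reducing it to the even-$d$ path.

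The hard part will be this last bookkeeping: tracking all six operators simultaneously through the direct-sum decomposition, disposing of the degenerate $1$-dimensional blocks of types $(1,0)$ and $(0,1)$ that the CS decomposition may produce but which Definition~\ref{def:normalform} does not contain, and breaking even cycles into paths without decreasing the value. The CS-decomposition setup and the rearrangement step of the second paragraph should by contrast be routine.
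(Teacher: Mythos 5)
Most of your proposal retraces the paper's route: the CS decomposition of $(A_1,A_2)$ and $(B_1,B_2)$, the rearrangement step via Claim~\ref{claim:basis}, and your graph on $\{e_1,\dots,e_d\}$ is exactly the paper's chain argument in Claim~\ref{claim:irred}, with your connected components playing the role of irreducible summands and the ``best component'' selection replacing the paper's implicit reduction to irreducible strategies. You are also right that the delicate cases are the degenerate one-dimensional blocks and, above all, the even cycles, which the paper's appendix essentially leaves untreated (Claim~\ref{claim:irred} explicitly allows the all-two-dimensional alternative, and the main text only waves at the aligned case).

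The genuine gap is your proposed disposal of the cycles: the claim that replacing one two-dimensional $B$-block by two one-dimensional blocks ``cannot lower the value'' is false, and the assertion that a cycle and the resulting path differ only by the boundary term $(c_1-c_{d+1})/2d$ ignores that the endpoint coefficients of a path are eigenvalues of rank-one projector blocks and hence forced into $\{-1,+1\}$, whereas the wrap-around block of a cycle carries a free coefficient in $(-1,1)$ entering two $f$-couplings. Concretely, for $d=2$ the aligned (two-cycle) configuration $A_1=P_1(c)$, $A_2=P_2(c)$, $B_1,B_2$ a single CS block with coefficient $\gamma$ gives value $c\gamma+\tfrac12\sqrt{1-c^2}+\tfrac12\sqrt{1-\gamma^2}-1$, which at $c=\gamma=\sqrt{3}/2$ equals exactly $1/4$ --- this is the strategy behind~\eqref{eq:oneEPR} --- while the best joint-normal-form strategy of dimension $d'\le 2$ has value $\max_{c_1,c_3\in\{\pm1\}}\tfrac12 f(c_1,c_3)+\tfrac{c_1-c_3}{4}=\tfrac{\sqrt5}{2}-1\approx 0.118$ (and $d'=1$ is classical, hence $\le 0$). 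So no value-preserving cycle-to-path conversion exists, your fallback ``from the optimality condition on $B_1,B_2$'' cannot help since this counterexample is the global $d=2$ optimum, and in fact this shows the cyclic components must be handled by a different mechanism altogether: for a cyclic component the value is $\tfrac1d\sum_i f(c_{2i-1},c_{2i+1})$ with cyclic indexing and no boundary term, so $f\le\tfrac12$ bounds it by $1/4$ directly, which is what Theorem~\ref{lem:main} needs --- but it is not a reduction to the normal form of Definition~\ref{def:normalform}, and any complete write-up (yours or the paper's) has to treat this case separately rather than fold it into Lemma~\ref{lem:num2}.
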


\begin{proof}
Apply the CS decomposition to $A_1$ and $A_2$, resulting in a joint block-diagonalization basis $\{\ket{e_i}\}_i$, and to $B_1$ and $B_2$, resulting in $\{\ket{f_i}\}_i$. We first show that we may take $\{\ket{e_i}\} = \{\ket{f_i}\}$ without lowering the value of the strategy.

As we already noted, the optimal choice for $A_3$ (resp. $B_3$) is the projector on the positive eigenspace of $B_2-B_1$ (resp. $A_2-A_1$). This implies that the value of~\eqref{eq:A3} does not depend on the choice of basis $\{\ket{e_i}\}$, but only on the eigenvalues of $B_2-B_1$. Hence of all the terms in~\eqref{eq:I3322}, the only ones whose value depends on the choice of the bases $\{\ket{e_i}\}$ and $\{\ket{f_i}\}$ can be grouped together as $\bra{\Psi} (A_1+A_2) \otimes (B_1+B_2) \ket{\Psi}$. 

\begin{claim}\label{claim:basis} Let $\ket{\Psi}= \frac{1}{\sqrt{d}} \sum_i  \ket{i}\ket{i}$, and $A=\sum_i \alpha_i \ket{u_i}\bra{u_i}$ and $B=\sum_i \beta_i \ket{v_i}\bra{v_i}$ positive. Then the expression $\bra{\Psi} A\otimes B \ket{\Psi}$ is maximized when the $\ket{u_i},\ket{v_i}$ are a permutation of the Schmidt basis of $\ket{\Psi}$.
\end{claim}

\begin{proof}
For any two matrices $A,B$ we have $\bra{\Psi} A \otimes B \ket{\Psi} = \frac{1}{d}\Tr(A^T B)$. 
Note that $A^T$ has the same eigenvalues as $A$. 
We then have by~\cite[Lemma IV.11]{rs:converse} that there exists a permutation $\pi \in S_d$ such that
\begin{align}
	\frac{1}{d}\tr(A^T B) \leq \sum_{j = 1}^d \lambda^A_{\pi(j)} \lambda^B_j\ ,
\end{align}
where $\lambda^{A}_1,\ldots,\lambda^{A}_d$
and $\lambda^{B}_1,\ldots,\lambda^{B}_d$ are the eigenvalues of $A$ and $B$ respectively.
\end{proof}

Given our specific choice of basis for the block-diagonalization, we have that $A_1+A_2$ (resp. $B_1+B_2$) is diagonal in the basis $\{\ket{e_i}\}$ (resp. $\{\ket{f_i}\}$), hence Claim~\ref{claim:basis} shows that these two bases may be taken equal (up to permutation) without lowering the value of the strategy. 

\medskip

We call a strategy given by projectors $\{A_j,B_k\}_{j,k}$ \emph{irreducible}
if it cannot be decomposed as a direct sum of lower-dimensional strategies. We show that any irreducible strategy has the form described in Definition~\ref{def:normalform}.

\begin{claim}\label{claim:irred} Suppose $\{A_j,B_j\}$ is irreducible. If $d$ is even, then either all blocks of the joint decomposition of $\{A_1,A_2,B_3\}$ and $\{B_1,B_2,A_3\}$ are two-dimensional, or $\{A_1,A_2,B_3\}$ have exactly two $1$-dimensional blocks and $\{B_1,B_2,A_3\}$ none (or vice-versa). If $d$ is odd, then each of $\{A_1,A_2,B_3\}$ and $\{B_1,B_2,A_3\}$ have exactly one common $1$-dimensional block.
\end{claim}

\begin{proof} We treat the case of even dimension, the odd-dimensional case being analogous. Reason by contradiction and first assume e.g. that $\{A_1,A_2,B_3\}$ each have more than two $1$-dimensional blocks in their joint block-diagonalization. We show that there is a non-trivial subspace stabilized by all operators $\{A_j,B_k\}$, contradicting the strategy's irreducibility. 

Let $\ket{e_1}$ be the vector corresponding to a one-dimensional block of $\{A_1,A_2,B_3\}$. Since the $\{\ket{f_i}\}$ are a permutation of $\{\ket{e_i}\}$, there exists an $i_1$ such that $\ket{f_{i_1}}=\ket{e_1}$. There are two possibilities for $\ket{f_{i_1}}$: either it is a joint eigenvector of $B_1,B_2$ and $A_3$ (i.e. it corresponds to a one-dimensional block in their joint block-diagonalization), or there exists an index $i_2$ such that Span$\{\ket{f_{i_1}},\ket{f_{i_2}}\}$ is left invariant by the action of $B_1,B_2$ and $A_3$ (i.e. it corresponds to a two-dimensional block). In the first case we have already found a strict subspace Span$\{\ket{e_1}\}$ stabilized by all $\{A_j,B_k\}$. In the second case we can iterate this procedure, assuming without loss of generality that $\ket{e_2}=\ket{f_{i_2}}$. There are again two cases: either $\ket{e_2}$ corresponds to a $1$-dimensional block of $\{A_1,A_2,B_3\}$, in which case Span$\{\ket{e_1},\ket{e_2}\}$ is a  non-trivial stable subspace, or there is a vector $\ket{e_3}$ such that $(\ket{e_2},\ket{e_3})$ corresponds to a $2$-dimensional block of $\{A_1,A_2,B_3\}$. We will then find an $i_3$ such that $\ket{f_{i_3}} = \ket{e_3}$, and so on.

In all cases, the process must end as soon as one of the vectors $\ket{e_k}$ encountered corresponds to a $1$-dimensional block of $\{A_1,A_2,B_3\}$. Given our assumption that there were three or more such blocks, we have found a strict subspace stabilized by all $\{A_j,B_k\}$, contradicting the irreducibility assumption.
\end{proof}

As a consequence of Claim~\ref{claim:irred}, we can block-diagonalize the pair of projectors $(A_1,A_2)$ with blocks 
\begin{align}
	L_{1}^{2i} &= \frac{1}{2}\begin{pmatrix} 1-c_{2i} & -s_{2i} \\ -s_{2i} & 1+c_{2i} \end{pmatrix}\ ,\\
	L_{2}^{2i} &= \frac{1}{2}\begin{pmatrix} 1-c_{2i} & s_{2i} \\ s_{2i} & 1+c_{2i} \end{pmatrix}\ ,
\end{align}
where $c_{2i} \in (-1,1)$ and $s_{2i} = \sqrt{1-c_{2i}^2}$, together possibly with an initial and final $1$-dimensional blocks, depending on the parity of the dimension. 

In the definition of a normal form we also require the one-dimensional blocks to have the same coefficients for both $A_1$ and $A_2$, which is
is easily seen to hold without loss of generality from the optimality of the strategy $\{A_j,B_k\}$. Indeed, let $i$ be the index of such a block, corresponding to vector $\ket{e_i}$; $A_1$ and $A_2$ are necessarily chosen so as to maximize the value of~\eqref{eq:A1} and~\eqref{eq:A2} respectively, and the coefficient in front of $(A_1)_{i,i}$ and $(A_2)_{i,i}$ will be the same in both equations, so that the optimal choice is the same. Similarly, the  matrices $(B_1,B_2)$ can be block-diagonalized with blocks:
\begin{align}\label{eq:blocks}
 R_{1}^{2i+1} &= \frac{1}{2}\begin{pmatrix} 1+c_{2i+1} & -s_{2i+1} \\ -s_{2i+1} & 1-c_{2i+1} \end{pmatrix}\ ,\\
	 R_{2}^{2i+1} &= \frac{1}{2}\begin{pmatrix} 1+c_{2i+1} & s_{2i+1} \\ s_{2i+1} & 1-c_{2i+1} \end{pmatrix}\ .
 \end{align}
Finally, it is easy to infer from~\eqref{eq:A3} (resp.~\eqref{eq:B3}) the necessary form of $A_3$ (resp. $B_3$): indeed, it is simply the projector on the positive eigenspace of $B_2-B_1$ (resp. $A_2-A_1$), which is a block $P_3$ whenever $B_1,B_2$ (resp. $A_1,A_2$) have a common $2$-dimensional block, and a block $(1)$ whenever $B_1, B_2$ (resp. $A_1,A_2$) have a common one-dimensional block. 
\end{proof}

\section{The value of a strategy in joint normal form}

In this section we derive an expression for the value obtained in~\eqref{eq:I3322} for any strategy in joint normal form (Lemma~\ref{lem:bound}), and then show how it can be upper-bounded by analytical techniques (Lemma~\ref{lem:num2}).

\begin{lemma}\label{lem:bound} Suppose $\{A_j,B_k\}$ is a strategy in joint normal form, described by a certain block structure and corresponding sequence of coefficients $c_i$. Then the value of~\eqref{eq:I3322} for this strategy for even dimensions $d$ is given by
	\begin{align}\label{eq:omegaeven}
\omega = \frac{1}{d}\sum_{i=1}^{d/2} f(c_{2i-1},c_{2i+1}) + \frac{c_1-c_{d+1}}{2d}\ ,
\end{align}
and for odd dimension $d$ by
\begin{align}\label{eq:omegaodd}
\omega &= \frac{1}{d}\sum_{i=1}^{(d-1)/2}f(c_{2i-1},c_{2i+1})\\
&\qquad + \frac{1}{d}\Big(c_d\,c_{d+1}+\frac{c_1-c_{d+1}}{2}-1+\frac{1}{2}\sqrt{1-c_d^2}\Big)\ ,\nonumber
\end{align}
where 
\begin{align}
	&f(x,y)\nonumber\\
	&= \sqrt{(x+y)^2+1}+\frac{1}{2}\sqrt{1-x^2} + \frac{1}{2}\sqrt{1-y^2} - 2\ .
\end{align}
\end{lemma}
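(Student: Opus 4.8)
The plan is to substitute the explicit block-diagonal projectors of Definition~\ref{def:normalform} directly into the rewriting~\eqref{eq:secondPart} of $\langle I_{3322}\rangle$, in which $\langle X,Y\rangle=\tfrac1d\Tr(X^TY)$, and to evaluate each of its six terms block by block. In the normal-form basis $(e_1,\dots,e_d)$ the triple $\{A_1,A_2,B_3\}$ is jointly block-diagonal on the pairs $(e_{2i-1},e_{2i})$, while $\{B_1,B_2,A_3\}$ is jointly block-diagonal on the \emph{shifted} pairs $(e_{2i},e_{2i+1})$, together with one (for odd $d$) or two (for even $d$) one-dimensional end blocks. Hence each Hilbert--Schmidt inner product appearing in~\eqref{eq:secondPart} is a sum of contributions coming from at most $2\times2$ sub-blocks indexed by consecutive pairs of basis vectors, and the entire computation reduces to enumerating and summing these local contributions.

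First I would dispatch the single-operator terms: since $\Tr P_1(c)=\Tr P_2(c)=\Tr P_3=1$, the terms $\langle A_2,\id\rangle$, $\langle\id,B_1\rangle$, $\langle\id,B_2\rangle$ are immediate, contributing only a constant together with a boundary correction in $c_1$ and $c_{d+1}$ from the one-dimensional end blocks. Next comes the diagonal part $\langle A_1{+}A_2,B_1{+}B_2\rangle$: the CS normalization~\eqref{eq:cs1},~\eqref{eq:cs2} was chosen precisely so that $P_1(c){+}P_2(c)=\mathrm{diag}(1-c,1+c)$, hence both $A_1{+}A_2$ and $B_1{+}B_2$ are diagonal in the normal-form basis and I simply multiply the two diagonals coordinate by coordinate. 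Because the two block structures are shifted, each interior coordinate pairs one even-indexed coefficient with one odd-indexed one; expanding the products and summing, the contributions that are \emph{linear} in the even coefficients telescope and cancel, leaving a constant, the boundary term, and bilinear couplings between each even coefficient and its two neighbouring odd coefficients. Finally, the square roots enter only through the remaining two terms: on each $2\times2$ CS block of $(A_1,A_2)$ one has $A_2{-}A_1=\left(\begin{smallmatrix}0&s_{2i}\\ s_{2i}&0\end{smallmatrix}\right)$ while $B_3$ restricts to $P_3$, so a $2\times2$ trace yields $s_{2i}=\sqrt{1-c_{2i}^2}$; symmetrically $\langle A_3,B_2{-}B_1\rangle$ produces $\sum_i\sqrt{1-c_{2i+1}^2}$ over the two-dimensional $(B_1,B_2)$-blocks.

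Assembling the six terms and dividing by $d$, the even-dimensional case collects --- after maximizing each even coefficient $c_{2i}$ out of the bilinear-plus-square-root expression it appears in, using the elementary identity $\max_{c\in[-1,1]}\bigl(ac+b\sqrt{1-c^2}\bigr)=\sqrt{a^2+b^2}$ valid for $b\ge0$ (the maximizer $c=a/\sqrt{a^2+b^2}$ lying in $[-1,1]$) --- into exactly $\tfrac1d\sum_i f(c_{2i-1},c_{2i+1})+\tfrac{c_1-c_{d+1}}{2d}$, which is~\eqref{eq:omegaeven}. The odd-dimensional case is handled the same way; the only change is that the last coordinate now lies in a one-dimensional block shared by $A_3$ and $B_3$, so its contribution is not absorbed into an $f$-term and instead yields the explicit correction $c_dc_{d+1}+\tfrac{c_1-c_{d+1}}{2}-1+\tfrac12\sqrt{1-c_d^2}$ appearing in~\eqref{eq:omegaodd}.

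The step I expect to be the main obstacle is the careful bookkeeping of the shifted block structure together with the end/boundary blocks for each parity of $d$ --- in particular the wrap-around $2\times2$ block between $e_d$ and $e_1$: one must check that the terms linear in the even coefficients really do cancel by telescoping, so that what remains is exactly the clean bilinear coupling that lets each even coefficient be eliminated by the one-variable optimization, and that the leftover boundary pieces combine into the stated $\tfrac{c_1-c_{d+1}}{2d}$ (respectively the odd-$d$ correction). The individual $2\times2$ trace evaluations and the scalar optimization are routine once this block picture is pinned down.
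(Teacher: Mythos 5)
Your proposal is correct and follows essentially the same route as the paper: evaluate each inner product block by block in the shifted normal-form basis (your grouping via~\eqref{eq:secondPart} versus the paper's grouping via~\eqref{eq:A1}--\eqref{eq:A3} is only cosmetic), then eliminate each even coefficient $c_{2i}$ with the identity $\max_{c\in[-1,1]}\bigl(ac+b\sqrt{1-c^2}\bigr)=\sqrt{a^2+b^2}$, exactly as the paper does via $\tau_{2i}$, with the parity-dependent boundary blocks producing the $\tfrac{c_1-c_{d+1}}{2d}$ and odd-$d$ correction terms. One small clarification: in Definition~\ref{def:normalform} there is no wrap-around $2\times2$ block between $e_d$ and $e_1$ --- the ends are genuine one-dimensional blocks (forcing $c_1,c_{d+1}\in\{-1,1\}$, which is what makes the $f$-form of the sum exact) --- but since you also account for the one-dimensional end blocks, this does not affect the argument.
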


\begin{proof} We treat the cases of even and odd dimension separately.

\paragraph{$d$ even.} In that case we know that the block-diagonalization of either $\{A_1,A_2,B_3\}$ or $\{B_1,B_2,A_3\}$ contains exactly two $1$-dimensional blocks, while the other contains none. We assume that $\{B_1,B_2,A_3\}$ has no $1$-dimensional blocks; the other case is treated symmetrically. In this case we can write
\begin{align}
	A_2 &= \frac{1}{2} \begin{pmatrix} 1-c_2 & s_2 & 0 & 0 & \cdots\\s_2 & 1+c_2 & 0 & 0 & \cdots\\ 0 & 0 & 1-c_4 & s_4 & \cdots \\ 0 & 0 & s_4 & 1+c_4 & \cdots \\ \vdots & \vdots & \vdots & \vdots & \ddots \end{pmatrix}\ ,\\
		B_2 &= \frac{1}{2} \begin{pmatrix} 1-c_{1} & 0 & 0 & 0 & \cdots\\ 0 & 1+c_3 & s_3 & 0 & \cdots\\  0 & s_3 & 1-c_3  & 0 & \cdots \\ 0 & 0 & 0&  1+c_5 & \cdots \\ \vdots & \vdots & \vdots & \vdots & \ddots \end{pmatrix}\ ,
		\end{align}
where $A_1$ and $B_1$ are identical to $A_2$ and $B_2$ respectively but have their off-diagonal elements negated, and $c_{1},c_{d+1} \in \{-1,1\}$.
 
Fixing the coefficients of $B_1$ and $B_2$, we can derive constraints on those of $A_1$ and $A_2$ from the constraint that they should be chosen so as to maximize~\eqref{eq:A1} and~\eqref{eq:A2}. The two equations are similar; let's look at~\eqref{eq:A2}. Its value can be calculated as
\begin{align}
\frac{1}{d}\sum_{i,j}& (A_2)_{i,j} \big((B_1)_{i,j}+(B_2)_{i,j}+(B_3)_{i,j} - \delta_{i,j}\big)\nonumber\\
 &= \frac{1}{d}\sum_{i=1}^{d/2}\Big( \frac{1}{2}\big(1-c_{2i}\big)\big((1-c_{2i-1})+\frac{1}{2}-1\big)\nonumber \\
 &\qquad + \frac{1}{2}\big(1+c_{2i}\big)\big((1+c_{2i+1})+\frac{1}{2}-1\big)\Big)\nonumber \\
 &\qquad + \frac{1}{d}\sum_{i=1}^{d/2} \frac{s_{2i}}{2}\nonumber\\
&= \frac{1}{d}\sum_{i=1}^{d/2}\left( \frac{1-c_{2i}}{2}\left(\frac{1}{2}-c_{2i-1}\right) \right. \nonumber\\
&\qquad \left. +\frac{1+c_{2i}}{2}\left(\frac{1}{2}+c_{2i+1}\right) + \frac{s_{2i}}{2}\right)\label{eq:tauc}
\end{align}
Setting $\tau_{2i} = (c_{2i-1}+c_{2i+1})/2$, for a fixed $c_{2i-1},c_{2i+1}$ the 
choice of $c_{2i}$ which maximizes~\eqref{eq:tauc} is $c_{2i} = 2\tau_{2i}\,(4\tau_{2i}^2+1)^{-1/2}$, which gives a value of $\frac{1}{d}\sum_{i=1}^{d/2} \sqrt{4\tau_{2i}^2+1}/2+1/2+(c_{2i+1}-c_{2i-1})/2$ for~\eqref{eq:A2}. \eqref{eq:A1} is maximized for the same choice of coefficients, and has exactly the same value. Concerning ~\eqref{eq:A3}, we find that its value is simply
\begin{align}
\frac{1}{d}\sum_{i,j} (A_3)_{i,j} \big((B_2)_{i,j}-(B_1)_{i,j}\big) &= \frac{1}{d}\sum_{i=1}^{d/2-1} s_{2i+1}\ .
\end{align}
Combining \eqref{eq:A1},\eqref{eq:A2} and~\eqref{eq:A3}, and subtracting $(1/d)(\Tr(B_1)+2\Tr(B_2))$, we obtain the value of~\eqref{eq:I3322}, which is thus
\begin{align}
	\omega &= \frac{1}{d}\sum_{i=1}^{d/2} \left(\sqrt{(c_{2i-1}+c_{2i+1})^2+1}+1\right) \nonumber \\
	&\qquad + \frac{1}{d}(c_{d+1}-c_1)+\frac{1}{d}\sum_{i=1}^{d/2-1}  \sqrt{1-c_{2i+1}^2} \nonumber \\
	&\qquad - 3\left(\frac{1}{2} + \frac{c_{d+1}-c_1}{2d}\right)\ ,
\end{align}
where we replaced $s_{2i+1} = \sqrt{1-c_{2i+1}^2}$. Using the definition of $f$, this can be re-written as
$$\omega = \frac{1}{d}\sum_{i=1}^{d/2} f(c_{2i-1},c_{2i+1})  + \frac{c_1-c_{d+1}}{2d}$$

\paragraph{$d$ odd.} In that case, each of $\{A_1,A_2,B_3\}$ and $\{B_1,B_2,A_3\}$ must have a $1$-dimensional-block in their joint block-diagonalization; say that the one for $\{A_1,A_2,B_3\}$ is the last block while the one for $\{B_1,B_2,A_3\}$ is the first block. We can proceed exactly as above to evaluate the value of this strategy, under the condition that it is optimal and hence maximizes~\eqref{eq:A1}-\eqref{eq:A3}, which lets us express the even coefficients $c_{2i}$ as a function of the odd ones $c_{2i+1}$. 
Omitting a few calculations very similar to the ones we performed in the even-dimensional case, we obtain that the value of this solution is
 \begin{align}
 \omega &= \frac{1}{d}\sum_{i=1}^{(d-1)/2} \left(\sqrt{(c_{2i-1}+c_{2i+1})^2+1}+1\right) + \frac{1}{d}\big(c_{d}-c_1\big)\nonumber\\
 &\qquad + \frac{1}{d}\big(1-c_{d+1}\big)\big(\frac{1}{2}-c_d\big)\nonumber\\
 &\qquad + \frac{1}{d}\sum_{i=1}^{(d-1)/2}  \sqrt{1-c_{2i+1}^2}  - 3\left(\frac{1}{2} - \frac{c_1}{2d}\right)\\ 
&= \frac{1}{d}\sum_{i=1}^{(d-1)/2} \left(a_i - 2\right) \\
&\qquad + \frac{1}{d}\Big(c_d\,c_{d+1}+\frac{c_1-c_{d+1}}{2}-1+\frac{1}{2}\sqrt{1-c_d^2}\Big)\ .\nonumber
\end{align}
\end{proof}

It now remains to bound $\omega$. The following claim, proven 
in Section~\ref{secApp:individualTerms}, will be useful. 

\begin{claim}\label{claim:numerics} Let $f(x,y) = \sqrt{(x+y)^2+1}+\sqrt{1-x^2}/2 + \sqrt{1-y^2}/2 - 2$ be defined on $[-1,1]^2$. Then
\begin{enumerate}
\item The maximum of $f(a,b)+f(b,c)$ over all $a,b,c\in[-1,1]^2$ such that $a+b\geq 0$ and  $b+c\leq 0$ is less than $.244$.
\item The maximum of $f(1,b)+f(b,c)$ over all $b,c\in[-1,1]^2$ such that $1+b\geq 0$ and  $b+c\leq 0$ is less than $.103$. 
\item The maximum of $f(a,1)$ over all $a\in[-1,1]$ is less than $.368$.
\end{enumerate}
\end{claim}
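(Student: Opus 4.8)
The plan is to prove all three items by elementary calculus after first cutting down the number of free variables. In items~1 and~2 each of the ``outer'' variables ($a$ and $c$ in item~1, and $c$ in item~2) occurs in only one of the two $f$-terms and is otherwise unconstrained, so for a fixed value of the shared variable $b$ it can be optimised away. Using the rewriting
\begin{align*}
f(a,b)+f(b,c)&=\Bigl(\sqrt{(a+b)^2+1}+\tfrac12\sqrt{1-a^2}\Bigr)\\
&\quad+\Bigl(\sqrt{(b+c)^2+1}+\tfrac12\sqrt{1-c^2}\Bigr)+\sqrt{1-b^2}-4
\end{align*}
and introducing the one-variable function
\[
\Phi(t):=\max\Bigl\{\,\sqrt{(a+t)^2+1}+\tfrac12\sqrt{1-a^2}\ :\ a\in[-1,1],\ a+t\ge0\,\Bigr\},
\]
the substitution $c\mapsto-c$ identifies the second bracket, maximised over $c$ with $b+c\le0$, with $\Phi(-b)$, while the first bracket, maximised over $a$ with $a+b\ge0$, is $\Phi(b)$. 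Thus item~1 becomes $\max_{b\in[-1,1]}\bigl(\Phi(b)+\Phi(-b)+\sqrt{1-b^2}\bigr)<4.244$, and by evenness it suffices to take $b\in[0,1]$; since moreover $f(1,b)=\sqrt{(1+b)^2+1}+\tfrac12\sqrt{1-b^2}-2$ depends only on $b$, the same reduction of the $c$-maximisation turns item~2 into $\max_{b\in[-1,1]}\bigl(\sqrt{(1+b)^2+1}+\sqrt{1-b^2}+\Phi(-b)\bigr)<4.103$, and item~3 is just $\Phi(1)<2.368$.

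The second step is to describe $\Phi$ explicitly. Because $a\mapsto\sqrt{(a+t)^2+1}$ is convex and $a\mapsto\tfrac12\sqrt{1-a^2}$ is concave, a short sign analysis of $\tfrac{a+t}{\sqrt{(a+t)^2+1}}-\tfrac{a}{2\sqrt{1-a^2}}$ yields: for $t\ge0$ the maximum defining $\Phi(t)$ is attained at the unique interior critical point $a^\star(t)$ with $\tfrac{a+t}{\sqrt{(a+t)^2+1}}=\tfrac{a}{2\sqrt{1-a^2}}$; for $t<0$ it is the larger of the boundary value $1+\tfrac12\sqrt{1-t^2}$ (attained at $a=-t$) and an interior critical value, with the boundary value winning once $|t|$ exceeds an explicit threshold (numerically close to $0.17$). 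In every case $\Phi$ is an explicit, piecewise-smooth function of one variable, so items~1--3 are now reduced to maximising concrete functions of the single variable $b$ (items~1,~2) or $a$ (item~3); near the eventual maximiser $b\approx\tfrac12$ of items~1 and~2 one is in the boundary regime $\Phi(-b)=1+\tfrac12\sqrt{1-b^2}$, so item~2 becomes $\max_{b\in[0,1]}\bigl(\sqrt{(1+b)^2+1}+\tfrac32\sqrt{1-b^2}\bigr)<3.103$ and item~1 becomes the two-variable bound $\max_{a+b\ge0}\bigl(\sqrt{(a+b)^2+1}+\tfrac12\sqrt{1-a^2}+\tfrac32\sqrt{1-b^2}\bigr)<3.244$, while item~3 concerns $a\approx0.87$.

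The third step is to carry out these (at most two-variable) maximisations. I would make the numerical estimates rigorous using that $u\mapsto\sqrt{u^2+1}$ lies below each of its secant lines and $u\mapsto\sqrt{1-u^2}$ below each of its tangent lines: on a sufficiently fine partition of the parameter interval one replaces these two functions by the corresponding affine upper bounds, after which each maximisation is that of a function which is, piecewise, the sum of an affine function and a concave function, and is found by inspection. The step I expect to be the real obstacle is exactly this last one: the three target constants $.244$, $.103$, $.368$ are essentially tight, the true maxima being roughly $0.242$, $0.102$ and $0.367$, so there is no room for lossy bounding anywhere --- the partition must be refined near the maximisers, or equivalently the critical point $a^\star(\cdot)$ and the optimal $b$ must be localised to small intervals. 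Two further bookkeeping nuisances are the boundary-versus-interior case distinction in the description of $\Phi$, and the fact that for $b\le0$ in item~2 the optimal $c$ genuinely lies in the interior; in that last range, however, the whole expression stays comfortably negative, so the analysis there is routine.
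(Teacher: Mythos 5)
Your step-1 and step-2 reductions are correct, and they follow a genuinely different route from the paper's: there we eliminate the square roots by auxiliary variables ($x^2=(a+1)^2+1$, $z^2=1-a^2$, etc.) and certify each bound by a Positivstellensatz/sum-of-squares certificate found through the SDP hierarchy (an explicit $4\times 4$ positive-semidefinite check for case 3, larger level-1 certificates for cases 1 and 2), whereas you reduce everything to one- and two-variable maximizations through the function $\Phi$. The algebra behind your reduction checks out: the split of $f(a,b)+f(b,c)$, the identification of the $c$-maximization with $\Phi(-b)$ after $c\mapsto -c$, the evenness in $b$ for item 1, and $\Phi(1)$ for item 3; your estimates of the true maxima (about $0.242$, $0.102$, $0.367$) are also accurate.

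The genuine gap is that the argument stops exactly where the quantitative content of the claim lies. The slack in all three bounds is of order $10^{-3}$, and your final step --- replacing $\sqrt{u^2+1}$ by secant upper bounds and $\sqrt{1-u^2}$ by tangent upper bounds on a ``sufficiently fine partition'' --- is only announced: no partition is exhibited, no affine majorants are written down, and no verification is done near the maximizers (e.g.\ at $b\approx 0.48$ in item 2 the reduced objective $\sqrt{(1+b)^2+1}+\tfrac32\sqrt{1-b^2}$ comes within roughly $10^{-3}$ of the target $3.103$, so the mesh there must be very fine and the endpoint behaviour of $\sqrt{1-b^2}$ handled separately). In addition, several structural facts your case analysis leans on are asserted without proof: uniqueness of the interior critical point of $\Phi(t)$ for $t\ge 0$ (a difference of two increasing functions need not change sign only once), the boundary-versus-interior threshold near $|t|\approx 0.17$, and the statement that the $b\le 0$ range in item 2 is ``comfortably negative'' (at $b=0$ the value is only about $-0.005$). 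All of this is plausibly fixable by actually executing the certified partition computation, but until those explicit inequalities are produced the claim is not established; note that the paper, for all its reliance on SDP numerics, does exhibit a concrete, directly verifiable certificate at least for case 3, and asserts explicit (if large) certificates for cases 1 and 2.
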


\begin{lemma}\label{lem:num2}
	Let $c_i \in [-1,1]$, for $i=1\ldots d+1$. Then the expression $\omega = \omega(c_i)$ in both~\eqref{eq:omegaeven} and~\eqref{eq:omegaodd} is upper-bounded by $\frac{1}{4}$.
\end{lemma}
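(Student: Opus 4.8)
The plan is to bound the two explicit formulas~\eqref{eq:omegaeven} and~\eqref{eq:omegaodd} directly, feeding the quantitative estimates of Claim~\ref{claim:numerics} into an essentially one-variable analysis. The first step is to record the elementary fact that $f(x,y)\le \tfrac12$ on all of $[-1,1]^2$, with equality only at $(x,y)=\pm(\tfrac{\sqrt3}{2},\tfrac{\sqrt3}{2})$: the stationarity equations $\partial_x f=\partial_y f=0$ force $\tfrac{x}{\sqrt{1-x^2}}=\tfrac{y}{\sqrt{1-y^2}}$, hence $x=y$ by injectivity, and on the diagonal $f(t,t)=\sqrt{4t^2+1}+\sqrt{1-t^2}-2$ is maximized at $t^2=\tfrac34$ with value $\tfrac12$, while on the boundary of the square $f<0.368$ by Claim~\ref{claim:numerics}(3). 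In particular every summand $f(c_{2i-1},c_{2i+1})$ appearing in~\eqref{eq:omegaeven} and~\eqref{eq:omegaodd} is at most $\tfrac12$.

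For even $d$: if the correction $\tfrac{c_1-c_{d+1}}{2d}$ is nonpositive, then $\omega\le \tfrac1d\cdot\tfrac d2\cdot\tfrac12=\tfrac14$. Otherwise, since $c_1$ and $c_{d+1}$ index one-dimensional projector blocks of the joint normal form (see the proof of Lemma~\ref{lem:bound}) they lie in $\{-1,1\}$, so $c_1=1$, $c_{d+1}=-1$, the correction equals exactly $\tfrac1d$, and it suffices to prove $\sum_{i=1}^{d/2} f(c_{2i-1},c_{2i+1})\le \tfrac d4-1$. The chain of consecutive sums $c_{2i-1}+c_{2i+1}$ starts $\ge 0$ (it is $1+c_3$) and ends $\le 0$ (it is $c_{d-1}-1$), so for some $k$ we have $c_{2k-1}+c_{2k+1}\ge 0$ and $c_{2k+1}+c_{2k+3}\le 0$; by Claim~\ref{claim:numerics}(1) this pair of summands is $<0.244$, improving to $<0.103$ via Claim~\ref{claim:numerics}(2) (after possibly using $f(x,y)=f(y,x)=f(-x,-y)$) when the window abuts an endpoint, where $c_1$ or $c_{d+1}$ is $\pm1$. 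Combined with $f(1,c_3),\,f(c_{d-1},-1)<0.368$ from Claim~\ref{claim:numerics}(3) on the two extreme summands and $f\le\tfrac12$ on the remaining ones, this gives $\sum_i f(c_{2i-1},c_{2i+1})<\tfrac d4-1$ as soon as there are enough summands to host all these savings without overlap; the finitely many small dimensions are finished by a direct low-dimensional estimate (for instance $f(1,t)+f(t,-1)\le 2\sqrt2-3<0$ settles $d=4$).

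For odd $d$ the structure is identical, with the correction replaced by $g:=c_dc_{d+1}+\tfrac{c_1-c_{d+1}}{2}-1+\tfrac12\sqrt{1-c_d^2}$ and $c_1,c_{d+1}\in\{-1,1\}$. If $c_1=c_{d+1}$ then $g\le \max_{t}\big(t+\tfrac12\sqrt{1-t^2}\big)-1=\tfrac{\sqrt5}{2}-1<\tfrac14$, hence $\omega\le\tfrac1d\big(\tfrac{d-1}{4}+\tfrac14\big)=\tfrac14$; the case $(c_1,c_{d+1})=(-1,1)$ gives $g\le\tfrac{\sqrt5}{2}-2<0$ and is also immediate. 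The only nontrivial case is $(c_1,c_{d+1})=(1,-1)$, where a one-variable computation gives $g\le \tfrac{\sqrt5}{2}$, but this is near-maximal only when $c_d$ is pushed to $-\tfrac{2}{\sqrt5}$; then the chain $c_1=1,c_3,\dots,c_d$ --- which must reach values that negative --- either incurs a sign change in some $c_{2i-1}+c_{2i+1}$ or ends with a strongly negative terminal summand $f(c_{d-2},c_d)$, and the resulting Claim~\ref{claim:numerics} savings absorb the excess of $g$ over $\tfrac14$, exactly as in the even case. Small $d$ is again checked by hand.

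The main obstacle is entirely the bookkeeping in the ``correction positive'' regime: bounding each summand independently loses a term of size $\Theta(1/d)$ and just misses $\tfrac14$, so one must localize where along the chain the sign of $c_{2i-1}+c_{2i+1}$ flips and verify that the cheapness of the two end summands together with the explicit constants $0.244$, $0.103$, $0.368$ always add up to the missing $+1$ (respectively cover the excess of $g$) --- and these add up only barely, which is why the sharp constants of Claim~\ref{claim:numerics} are needed. The delicacy is intrinsic: the bound is tight, since $\omega\to\tfrac14$ along the essentially constant configurations $c_i\approx\pm\tfrac{\sqrt3}{2}$ with vanishing correction, which is also why the equality characterization in $f\le\tfrac12$ matters and why a handful of small dimensions must be treated separately by elementary one- and two-variable estimates.
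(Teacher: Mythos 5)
Your even-dimensional argument is, in substance, the paper's own proof: restrict to $c_1=1$, $c_{d+1}=-1$ (justified by the one-dimensional blocks), locate the first index where the consecutive sums $c_{2i-1}+c_{2i+1}$ change sign, and absorb the extra $+1$ using the constants of Claim~\ref{claim:numerics} ($0.244+2\cdot 0.368<1$ in the interior case, $0.103+0.368<0.5$ when the flip abuts an endpoint), with $d=4$ checked directly via $f(1,t)+f(t,-1)<0$. The preliminary bound $f\le\tfrac12$ and the bookkeeping are identical; your equality characterization of $f\le\tfrac12$ and the exact constant $2\sqrt2-3$ are unnecessary but harmless. Up to this point the proposal is correct and follows the paper.

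The genuine gap is in the odd case with sign pattern $(c_1,c_{d+1})=(1,-1)$. You are right that this case needs a real argument: the correction $g=c_dc_{d+1}+\tfrac{c_1-c_{d+1}}{2}-1+\tfrac12\sqrt{1-c_d^2}$ can reach $\tfrac{\sqrt5}{2}$ there (the paper instead dismisses the odd case in one line by asserting $g<\tfrac14$ uniformly, an assertion that, as printed, fails exactly for this sign pattern, e.g.\ $g=1$ at $c_d=-1$; so your extra case analysis is well motivated). But your resolution is only a sketch, and its central dichotomy is not correct as stated. The excess $g-\tfrac14$ must be absorbed whenever $c_d\lesssim 0.18$, not only near $c_d=-\tfrac{2}{\sqrt5}$, and since the chain of summands ends at the free variable $c_d$ rather than at $c_{d+1}=-1$, no sign flip of $c_{2i-1}+c_{2i+1}$ is guaranteed --- unlike the even case, so ``exactly as in the even case'' does not apply. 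Nor is the terminal summand necessarily ``strongly negative'': take all consecutive sums positive with $c_{d-2}\approx 0.6$, $c_d\approx -0.55$; then there is no flip, $f(c_{d-2},c_d)\approx -0.18$ only, while $g\approx 0.97$, so the only saving your write-up provides (at most $0.132$ from $f(1,c_3)\le 0.368$) falls far short of the needed $\approx 0.72$. What actually closes this case is a \emph{joint} two-variable estimate of the form $\max\{\,f(x,y)-y+\tfrac12\sqrt{1-y^2}\,:\,x+y\ge 0\,\}\le C$ with $C$ around $0.81$ (its near-maximizer is $x\approx -y\approx 0.55$, precisely the configuration above), combined with the end-term saving; this bound is not among the three items of Claim~\ref{claim:numerics}, is not stated in your proposal, and the margin it leaves is thin enough that it must be proved with explicit constants in the same way as that claim. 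Until such an estimate (or a rigorous reduction of \eqref{eq:omegaodd} to \eqref{eq:omegaeven}) is supplied, the odd case of the lemma is not established by your argument.
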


\begin{proof}
	First note that the maximum value of the expression $c_d\,c_{d+1}+\frac{c_1-c_{d+1}}{2}-1+\frac{1}{2}\sqrt{1-c_d^2}$ over all $c_1,c_{d+1}\in\{-1,1\}$ and $c_d\in[-1,1]$ is less than $1/4$, hence~\eqref{eq:omegaodd} is always lower than~\eqref{eq:omegaeven}. Hence it is sufficient to show that $\omega = \frac{1}{d}\sum_{i=1}^{d/2} f(c_{2i-1},c_{2i+1}) + \frac{c_1-c_{d+1}}{2d}$ is upper-bounded by $1/4$, for any even $d$ and $(c_2,\ldots,c_d)\in[-1,1]^{d-1}$ and $c_1,c_{d+1}\in\{-1,1\}$. 

It is easy to verify that $f(x,y) \leq 1/2$ on the square $(x,y)\in [-1,1]^2$. Unfortunately, the extra term $\frac{c_1-c_{d+1}}{2d}$ potentially induces an additive $1/d$, so that it is not so immediate to bound $\omega$. 
Note that we can assume that $c_1 = 1$ and $c_{d+1}=-1$, since otherwise the bound follows trivially from the upper-bound on $f(x,y) \leq 1/2$. 

Given the value of $c_1$ and $c_{d+1}$, there must exist an $i$ such that $c_{2i-1}+c_{2i+1}\geq 0$ and $c_{2i+1}+c_{2i+3}\leq 0$; let $i_0$ be the first such $i$. 
We distinguish four cases, depending on the value of $i_0$.  
\begin{itemize}
\item If $d=4$, one gets that $f(1,c_3)+f(c_3,-1)<0$. Adding $(c_1-c_{d+1})/8$, one can see that $\omega < 1/4$. We assume $d>4$ for the remaining cases. 
\item If $i_0=1$, we can use the second bound in Claim~\ref{claim:numerics} to bound $f(c_1,c_3)+f(c_3,c_5)$ by $.103$, since $c_1=1$. In this case the value of $f(c_1,c_3)+f(c_3,c_5)+f(c_{d-1},c_{d+1})$ is at most $.103+.368 < .5$. Adding $1=  (c_1-c_{d+1})/2$ and dividing by $d$, we see that $\omega < 1/4$ irrespective of the value of the other $c_i$ (recall that $f(x,y)\leq 1/4$ for all $(x,y)$).  
\item If $i_0=d/2-1$, the same bound can be obtained by symmetry.
\item Otherwise $1<i_0<d/2-1$, in which case by using the first and last bounds from Claim~\ref{claim:numerics} we see that the value of $f(c_1,c_3)+f(c_{2i-1},c_{2i+1})+f(c_{2i+1},c_{2i+3})+f(c_{d-1},c_{d+1})$ is at most $.244+2\cdot .368 < 1$. Again adding $1=  (c_1-c_{d+1})/2$ and dividing by $d$, one sees that $\omega < 1/4$ irrespective of the value of the other $c_i$.
\end{itemize}
\end{proof}

\section{Details of Claim~\ref{claim:numerics}}\label{secApp:individualTerms}
We now provide the details of Claim~\ref{claim:numerics}. To find the claimed upper bounds we use a well-established 
optimization technique based on a hierarchy of semidefinite programs (SDPs) backed by the real Positivstellensatz~\cite{parrilo,parrilo:thesis}. More specifically,
if $t$ denotes a claimed upper bound, our goal will be to show that for any variables $a$, $b$ and $c$ satisfying the constraints
we have $t - h(a,b,c) \geq 0$, where $h(a,b,c)$ denotes the function we wish to optimize in case 1, 2 or 3.  To this end, we will first
rewrite any terms involving $\sqrt{\cdot}$ in the function $h(a,b,c)$ in terms of additional variables. Second, we will use
polynomial optimization techniques from~\cite{parrilo,parrilo:thesis} to obtain the bound $t$. This is exactly analogous to the techniques established in
quantum information to obtain bounds on quantum violation of Bell inequalities~\cite{npa1,npa2,as:qmp}.

We would like to emphasize that whereas semidefinite programming, as for example performed in Matlab, is a numerical technique, if a bound $t_\ell$ is obtained
at level $\ell$ of the SDP hierarchy then it is in principle possible to extract an \emph{analytical} proof that $t_\ell$ is an upper-bound on the corresponding expression $h$ from the numerics. That is, we do not rely on any
heuristic optimization methods that are not guaranteed to provide a rigorous bound.

\subsection{Case 3}

For completeness, we provide a brief informal sketch of this method for case 3; details can be found in~\cite{parrilo,parrilo:thesis}, or in the dual view of the SDP, as explained in this 
survey~\cite{monique:survey}. 
First of all, substituting 
\begin{align}
	x^2 &:= (a+1)^2 + 1 = a^2 + 2a + 2\ ,\\
	z^2 &:= 1 - a^2\ ,
\end{align}
our goal of showing that $t = 0.368$ is an upper bound to $f(a,1)$ can be restated as showing that
\begin{feasSDP}{we have}{$t \geq x + \frac{1}{2} z - 2$}
	&$x^2 = a^2 + 2a + 1$\\
	&$z^2 = 1 - a^2$\\
	&$-1 \leq a \leq 1$\ .
\end{feasSDP}
For simplicity, we will without loss of generality ignore the last constraint.
Now note that if we were able to find polynomials $t_1$ and $t_2$ in variables $x$, $z$, and $a$ such that
\begin{align}
	p &:= t - \left(x + \frac{1}{2} z - 2\right) - t_1 (a^2 + 2a + 2 - x^2)\\
	&\qquad - t_2 (1 - a^2 - z^2) = s_0\ ,\nonumber
\end{align}
where $s_0$ is a polynomial in $x$, $z$ and $a$ which is a sum of squares, then for any variables satisfying the desired constraints
$t - (x + \frac{1}{2} z - 2)\geq 0$ since $s_0$ is always positive. Our goal can thus be rephrased as searching for suitable polynomials $t_1$ and $t_2$
such that we can rewrite the resulting polynomial as a sum of squares. Very intuitively, level $\ell$ of the SDP hierarchy searches for such polynomials
up to degree $2\ell$ by searching for a matrix $Q_\ell$ such that $Q_\ell \geq 0$ and for $v_\ell = (1, a, x, z, \ldots)$ being the vector of all possible
monomials up to degree $\ell$ where we have $v_\ell^\dagger Q_\ell v_\ell = p$.
To convince ourselves, note that this means we search for $Q_\ell \geq 0$ such that
\begin{align}
	t &- \left(x + \frac{1}{2} z - 1\right) = v_\ell^\dagger Q_\ell v_\ell\\
	&\qquad + t_1 (a^2 + 2a + 2 - x^2) + t_2 (1 - a^2 - z^2)\ \nonumber
\end{align}
and thus for variables satisfying the constraints
\begin{align}
	t - \left(x + \frac{1}{2} z - 1\right) = v_\ell^\dagger Q_\ell v_\ell\ ,
\end{align}
which is clearly positive. The actual sums of squares polynomials $s_0$ can be obtained from $Q$ by diagonalizing $Q = U^\dagger D U$ where $U$ 
is unitary and $D$ is a diagonal matrix. Since $D$ only has positive entries ($Q \geq 0$), we obtain that $s_0 = \sum_j d_j (Uv)_j^\dagger (U v)_j$
is indeed a sum of squares.

It turns out that for case 3, we can already find such a matrix $Q$ at level $\ell = 0$ of the SDP, that is, $t_1, t_2 \in \Real$ are simply scalars.
To see how this works explicitly, let us first rewrite the polynomials above in terms of matrices. Let
\begin{align}
	M_0 &:= \left(\begin{array}{cccc} -2 & \frac{1}{2} & \frac{1}{4} & 0\\
		\frac{1}{2} & 0 & 0 & 0\\\frac{1}{4} & 0 & 0 & 0\\0 & 0 & 0& 0\end{array}\right)\ ,\\
	M_1 &:= \left(\begin{array}{cccc} 1 & 0 & 0 & 0\\
		0 & 0 & 0 & 0\\ 0 & 0 & -1 & 0\\0 & 0 & 0& -1\end{array}\right)\ ,
	\end{align}
	\begin{align}
	M_2 &:= \left(\begin{array}{cccc} 2 & 0 & 0 & 1\\
		0 & -1 & 0 & 0\\ 0 & 0 & 0 & 0\\1 & 0 & 0& 1\end{array}\right)\ ,\\
	T &:= \left(\begin{array}{cccc} t & 0 & 0 & 0\\
		0 & 0 & 0 & 0\\ 0 & 0 & 0 & 0\\0 & 0 & 0& 0\end{array}\right)\ .
\end{align}
Clearly, for
\begin{align}
	v &:= (1\ x\ z\ a)^T\ ,
\end{align}
we have
\begin{align}
	v^\dagger M_0 v &= x + \frac{1}{2} z - 2\ ,\\
	v^\dagger M_1 v &= 1 - a^2 - z^2\ ,\\
	v^\dagger M_2 v &= a^2 + 2a + 2 - x^2\ .
\end{align}
From the numerical solutions obtained by Matlab with SeDuMi~\cite{sedumi} and YALMIP~\cite{yalmip}, we can guess an analytical solution given by 
\begin{align}
	t_1 &= 0.51\\
	t_2 &= 0.24\\
	t &= 0.368
\end{align}
for which we can easily verify that
\begin{align}
	Q_0 := S - M_0 - t_1 M_1 - t_2 M_2 \geq 0\ ,
\end{align}
which concludes our claim.

\subsection{Cases 1 and 2}

The bounds for cases 1 and 2 are obtained analogously. The only difference is that we have to deal with more variables. 
Again, we first introduce auxiliary variables to eliminate terms containing $\sqrt{\cdot}$. We then search for suitable polynomials like $t_1$ and $t_2$ above. 
Unlike for the simple case 3, the desired bounds are not obtained at level $\ell = 0$ of the hierarchy. However, they are already found at level $\ell = 1$, and an analytical solution can again be extracted. Yet, since at level $\ell=1$ we observe polynomials of degree up to $2$ in both the original and
the auxiliary variables (in total $6$ for case 2, and $8$ for case 1) the resulting problem is already rather large (involving matrices of size $82 \times 82$ for case 1). 
We do not include these matrices here, but the Matlab scripts that can be used to extract the analytical values are available upon request.

\end{document}